\theoremstyle{plain}
\newtheorem{theorem}{Theorem}[section]
\theoremstyle{definition}
\theoremstyle{remark}
\begin{document}

\begin{frontmatter}

\title{A Radiation Exchange Factor Formulation with Proven Non-Negativity and Unconditional Energy Conservation}

\author{Nikolaj Maack Bielefeld\corref{cor1}}

\ead{nikolajbielefeld@gmail.com}
\ead[url]{https://gert.net}
\cortext[cor1]{Corresponding author}
\affiliation{organization={Independent researcher},
            addressline={Karl Bjarnhofs Vej 1A, st 2}, 
            postcode={7120}, 
            city={Vejle East},
            state={Southern Jutland},
            country={Denmark}}

\begin{abstract}
This paper presents a matrix formulation of the scalar laws of radiative transfer. The method applies to coupled mixed boundary condition problems on general domains. Participating media can range from transparent to absorbing, emitting, and scattering, with boundaries ranging from absorbing to reflecting. Given a non-dimensional first-interaction exchange factor matrix $\mathbf{F}$, the formulation partitions $\mathbf{F}$ into a single-step absorption matrix and a single-step reflection-scattering matrix via Hadamard products with a column-constant matrix of reflection-scattering coefficients. The resulting linear system encodes the radiative energy balance for arbitrary combinations of prescribed temperatures and prescribed source terms, with a proven non-singularity result for the mixed-boundary system. The method is shown to admit a unique non-negative solution for non-negative source terms whenever the maximum reflection-scattering coefficient is strictly less than unity, with unconditional energy conservation to machine precision. Validation is provided symbolically against the textbook closed-form solutions for infinite parallel plates and concentric cylinders, and numerically against the diffusion approximation in the high-extinction limit and against the results of Crosbie and Schrenker for pure and partial scattering cases. A comparison with Noble's matrix formulation of Hottel's zonal method reveals a discrepancy in that classical approach, not previously identified to the author's knowledge; the proposed formulation avoids this discrepancy. The method requires a single linear solve whose sparsity inherits from that of $\mathbf{F}$, making it applicable to medium-scale dense problems and to large-scale sparse problems with high extinction.
\end{abstract}

\begin{graphicalabstract}
\begin{figure}
\centering
\includegraphics[width=0.9\linewidth]{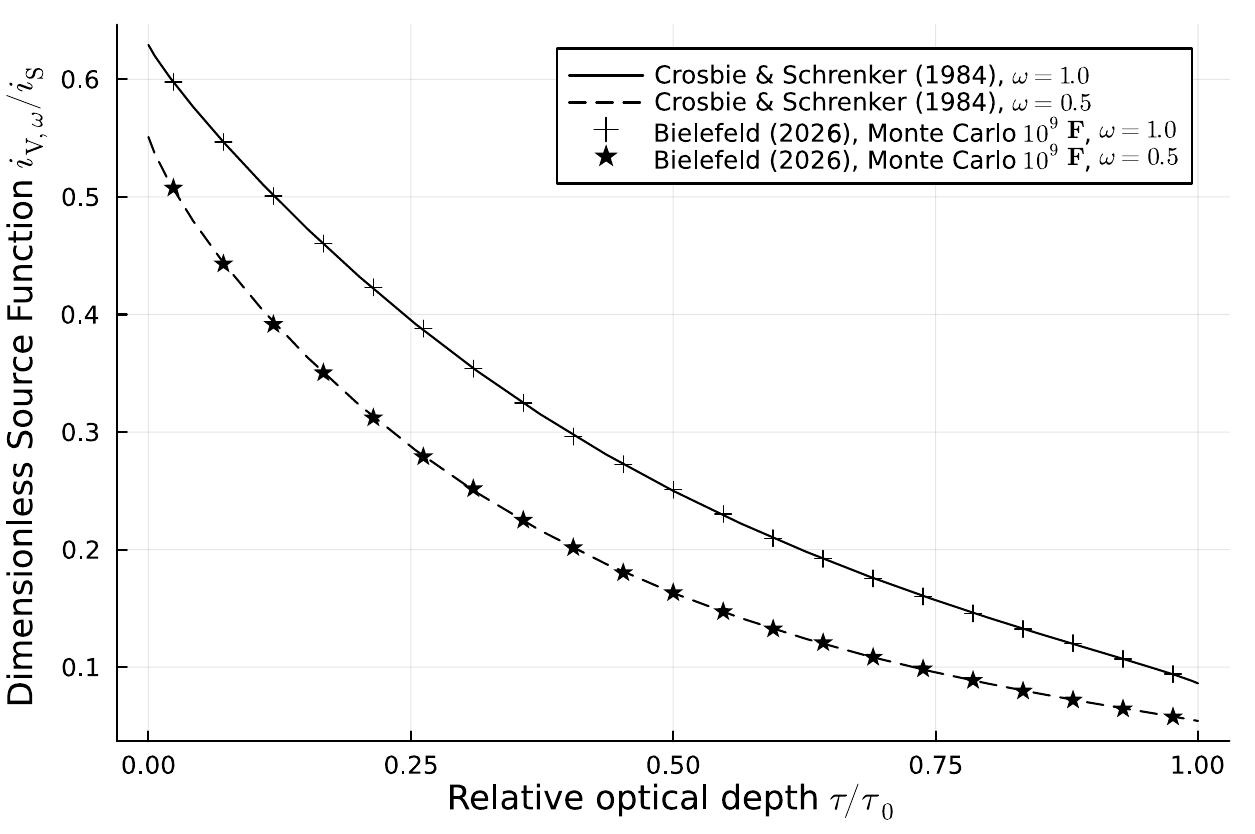}
\end{figure}
\end{graphicalabstract}

\begin{highlights}
\item Matrix formulation of radiative energy balance for coupled radiative transfer
\item Guaranteed non-negative radiation to machine precision for non-negative sources
\item Unconditional energy conservation to machine precision as algebraic identity
\item Symbolic validation against textbook formulas for plates and concentric cylinders
\item Discrepancy identified in Noble's formulation of Hottel's zonal method
\end{highlights}

\begin{keyword}
RTE
\sep
Radiative transfer
\sep
Exchange factors
\sep
Zonal method
\sep
Multiple scattering
\sep
Energy conservation
\sep
Matrix methods
\end{keyword}

\end{frontmatter}

\section{Introduction}

Radiative transfer in participating media is a fundamental process in many natural and engineered systems, from stellar atmospheres to thermal engineering. The theoretical foundation for radiative transfer was established in the 19th century by Gustav Kirchhoff, who formulated the fundamental relationship between emission and absorption of thermal radiation and introduced the concept of blackbody radiation \citep{Kirchhoff1860}. Building upon this work, Karl Schwarzschild developed early formulations for radiative transfer in stellar atmospheres in the early 1900s \citep{Schwarzschild1906}. The modern mathematical framework of the radiative transfer equation (RTE) was systematically developed by Subrahmanyan Chandrasekhar in his seminal 1950 treatise "Radiative Transfer" \citep{Chandrasekhar1950}, which provided rigorous analytical methods and established the equation as the cornerstone of radiation transport theory across diverse fields from astrophysics to engineering applications. Despite this rich theoretical heritage, the fundamental governing equation remains a challenging integro-differential equation in seven independent variables: Three spatial coordinates ($x$, $y$, $z$), two directional angles ($\theta$, $\phi$), one spectral wavelength ($\lambda$) and one temporal time ($t$). Under steady-state and grey media assumptions, the temporal ($t$) and spectral ($\lambda$) dependencies are eliminated, reducing the RTE from seven to five independent variables, which yields the differential form shown in equation \eqref{eq:RTE} \citep{Daun2021}:

\begin{equation}
\begin{split}
\frac{\partial i(S,\Omega)}{\partial S} = \; &
\kappa i_\mathrm{b}(S) 
- \kappa i(S,\Omega)
- \sigma_\mathrm{s} i(S,\Omega) \\
& + \frac{\sigma_\mathrm{s}}{4 \pi} \int_{\Omega_i=4 \pi}
i(S,\Omega_i) \Phi(\Omega_i,\Omega) d\Omega_i
\end{split}
\label{eq:RTE}
\end{equation}
where $i(S,\Omega)$ is the intensity along a ray trajectory, as a function of position $S$ and direction $\Omega\equiv\Omega(\theta,\phi)$, $\kappa$ is the absorption coefficient of the medium, $i_\mathrm{b}(S)$ is the blackbody intensity, $\sigma_\mathrm{s}$ is the scattering coefficient and $\Phi$ is the scattering phase function and the integral is over all solid angles (4$\pi$ steradians).

The RTE of equation \eqref{eq:RTE} is a differential energy balance along the trajectory $S$ of a single ray. To solve it one could, in principle, simply integrate along the ray trajectory. What makes the RTE particularly challenging to solve for general geometries is that all possible ray trajectories in the domain are coupled through two mechanisms: (1) The state of the participating medium, i.e. the temperature of the intervening gas, which depends on the local radiative heating rate and governs both the rate of emission (through blackbody emission laws) and temperature-dependent absorption properties, and (2) the in-scattering term, which requires integration over all incoming directions to account for radiation scattered from every direction into the direction of $S$. Therefore, for general geometries, to determine the equilibrium state, all possible ray trajectories must be solved simultaneously, which explains why the RTE has historically resisted general analytical treatment, and is most commonly solved numerically, or through analytical solutions for special simplified cases \citep{Daun2021}.

The RTE does not directly account for surface reflections, requiring radiative transfer formulations that couple the RTE with separate treatment of boundary reflection-scattering. The matrix framework proposed in this paper addresses such coupled reflection-scattering problems on general domains, with proven non-negativity and energy conservation properties.

Seeking to quantify combustion heat transfer, Hottel and co-workers \citep{Hottel1958}\citep{Hottel1967}\citep{Noble1975} pioneered solutions to the RTE using a discretized integral approach. Fundamentally, this integral approach produces matrices of \textit{exchange areas}, which describe the radiative connectivity of the discretized domain. The integrals for the exchange areas are:\\
Surface-surface:
\begin{equation}
	\overline{s_j s_k} = \frac{1}{\pi}\int_{A_j}\int_{A_k} \frac{\overline{t}(S_{j-k})\mathrm{cos}(\theta_j)\mathrm{cos}(\theta_k)}{S^2_{j-k}}dA_k dA_j
	\label{eq:viewfactor_surfsurf}
\end{equation}
Surface-gas and gas-surface:
\begin{equation}
	\overline{s_k g_\gamma} = \overline{g_\gamma s_k} = \frac{\kappa}{\pi} \int_{V_\gamma}\int_{A_k}\frac{\overline{t}(S_{k-\gamma}) \mathrm{cos}(\theta_k)}{S^2_{k-\gamma}} dA_k dV_\gamma
	\label{eq:viewfactor_gassurf}
\end{equation}
Gas-gas:
\begin{equation}
	\overline{g_\gamma g_{\gamma*}} = \frac{\kappa^2}{\pi} \int_{V_\gamma}\int_{V_{\gamma*}} \frac{\overline{t}(S_{\gamma-\gamma*})}{S^2_{\gamma-\gamma*}}dV_{\gamma*} dV_\gamma
	\label{eq:viewfactor_gasgas}
\end{equation}
where $\overline{t}=\mathrm{exp}(-\kappa S)$ is the transmissivity of the participating medium, $\theta$ is the angle with the surface normal, $\mathrm{cos}(\theta)$ of the emitter surface accounts for the diffuse Lambertian emission and $\mathrm{cos}(\theta)$ of the absorbing surface accounts for the projected area effect, $S$ is the distance from the point of emission to the point of absorption and $\kappa$ is the absorption coefficient of the medium. These integrals, being analytically intractable, were initially, before the general advent of modern-day digital computers, solved graphically and tabulated. Once the exchange areas have been determined for all pairs of emitters and absorbers in the system, subsequent solution of the RTE is achieved using linear algebra.

Prior to this period, having access to some of the first digital computers, Stanislav Ulam and co-workers  pioneered the Monte Carlo method \citep{MetropolisUlam1949}, a statistical method that can model complex phenomena which are analytically intractable. Not surprisingly, Monte Carlo methods can be applied to solve the multi-dimensional integrals formulated by Hottel and co-workers. A comprehensive review of radiative transfer methods can be found in \citep{Daun2021} and \citep{Modest2022}.

Similarly to Hottel's method, outlined above, the exchange factor formulation of the present work starts from the exchange area matrix. But unlike Hottel's method, which works directly with the exchange area matrix, the proposed formulation uses its non-dimensional counterpart, the \textit{exchange factor} matrix, where each row of the exchange area matrix has been normalized by its equivalent area of emission. Before solving a specific problem, the formulation proposed here partitions the exchange factor matrix at the single-step level, separating it into the fraction of incident radiation absorbed and the fraction reflected or scattered. Each row of the exchange factor matrix sums to unity by construction, since the radiation emitted from any element is fully accounted for across its destinations, and the single-step partition preserves this property element-wise. The multi-bounce structure of the radiative exchange arises algebraically from the linear solve in the mixed-boundary system. This separation of the single-step partition from the implicit multi-bounce solve is what enables the method's proven properties: guaranteed non-negativity, unconditional energy conservation by algebraic identity, and a single linear solve that preserves the sparsity of the exchange factor matrix.

Given the availability of an analytically derived matrix of exchange factors, generally referred to as the matrix $\mathbf{F}$, the entire solution is analytical. Thanks to Narayanaswamy \citep{Narayanaswamy2015}, who found a general analytical solution to equation \eqref{eq:viewfactor_surfsurf} with $\overline{t}(S)=1$, this is possible for any convex enclosure composed of polygon surfaces and a transparent medium, giving perfect solutions to multiple reflection systems, which would otherwise be represented by coupled Fredholm equations of the second kind \citep{handbooknumheat1988}, the single assumption being that angular distributions of emission and reflection are identical.

For problems involving participating media ($\overline{t}(S)<1$), analytical solutions to the multi-dimensional integrals of equations \eqref{eq:viewfactor_surfsurf}-\eqref{eq:viewfactor_gasgas} have yet to be found. In this case, one might employ Monte Carlo methods to obtain near-analytical estimates of $\mathbf{F}$. To ensure that $\mathbf{F}$ satisfies both energy conservation and reciprocity, smoothing should be employed, such as for example the method proposed by Daun et al. \citep{Daun2005}.

The solution method proposed in this work is general and, furthermore, it is flexible. In fact, a key strength of exchange factor-based methods for solution of the RTE, is that they allow the formidable monolithic problem of energy transfer by an electromagnetic wave field, to be broken down into two conceptually separate and more manageable parts: The geometry and extinction-based ray propagation part and the system-wide energy interaction part. If $\mathbf{F}$ is obtained through Monte Carlo ray tracing methods, any number of relevant optical phenomena may be modelled, such as refraction and diffraction, refraction being significant in atmospheric research, due to the variable refractive index. This can be achieved by solving differential equations for each ray \citep{BornWolfOptics1980}. Since the proposed formulation records only ray trajectory endpoints, it is in some sense blind to the phenomena which govern the trajectories, which explains why the matrix formulation of the present work will preserve any modelled ray trajectory properties.

For the proposed formulation, since $\mathbf{F}$ refers only to the first interaction, Monte Carlo methods used to obtain $\mathbf{F}$ can be truncated after this initial interaction, making them computationally less demanding than basic Monte Carlo implementations that must track multiple reflection-scattering events \citep{MonteCarloPastFuture2021}. This will be increasingly true for higher single reflection-scattering coefficients due to the increased length of the ray paths. Instead, the proposed formulation performs the multiple reflection-scattering ray tracing analytically using solution of a linear system.

The main novel contributions of this paper are:
\begin{enumerate}
\item A matrix formulation of radiative energy balance laws for coupled surface-gas-scattering problems on general domains.
\item A mixed-boundary construction with proven non-singularity for arbitrary combinations of prescribed temperatures and prescribed source terms (Theorem~\ref{thm:M_nonsingular}).
\item Proven non-negative radiation for non-negative source terms (Theorem~\ref{thm:physical_correctness}).
\item Unconditional energy conservation to machine precision as an algebraic identity (Theorem~\ref{thm:unconditional_conservation}).
\item Identification of a discrepancy in Noble's matrix formulation of Hottel's zonal method, not previously reported to the author's knowledge, which the proposed framework avoids.
\end{enumerate}

The remainder of the paper is structured as follows: Section \ref{sec:methodology1} presents the methodology. Section \ref{sec:matrix_props} presents theorems and proofs establishing the matrix properties used to guarantee physical correctness. Section \ref{sec:physical_correct} establishes physical correctness through guaranteed exact non-negative radiation and energy conservation to machine precision. Section \ref{sec:validation} validates the formulation symbolically and numerically by comparing its solutions to existing methods and demonstrates wide applicability. Section \ref{sec:discussion_overall} discusses scalability, computational complexity, comparison to existing methods, limitations, and future work. Finally, Section \ref{sec:conclusion} presents the conclusion.

\section{Methodology}
\label{sec:methodology1}

The methodology is split into two parts: First, the exchange factor matrix is formally defined, and next, the exchange factor formulation is applied to solution of the RTE.

\subsection{Definition of the Exchange Factor Matrix}

The proposed formulation requires as an input a general description of the radiative connectivity of the domain in the form of the exchange factor matrix $\mathbf{F}$. The exchange factor matrix $\mathbf{F}$ may be obtained analytically or numerically. All of the subsequent operations of the proposed formulation are analytical. The $(m+n)\times(m+n)$ row-stochastic blocked first interaction exchange factor matrix $\mathbf{F}$ for a system of $m$ bounding surfaces and $n$ gas volume elements is defined as:
\begin{equation}
	\mathbf{F} = 
	\begin{bmatrix}
\mathbf{F}_\mathrm{ss} & \mathbf{F}_\mathrm{sg} \\
\mathbf{F}_\mathrm{gs} & \mathbf{F}_\mathrm{gg}
\end{bmatrix}
\end{equation}
where $\mathbf{F}$ is row stochastic to ensure that conservation of energy is satisfied and the entries of $\mathbf{F}$ are the fractions of energy emitted by the emitter of row $i$ which has its first interaction with the element of column $j$. The entries of $\mathbf{F}$ can be defined from equations \eqref{eq:viewfactor_surfsurf}-\eqref{eq:viewfactor_gasgas} by normalizing with the equivalent total capacity for emission and reflection-scattering of the emitter zone (and using the extinction coefficient $\beta$ in place of the absorption coefficient $\kappa$):
\begin{equation}
	F_{s_j s_k} = \frac{\overline{s_j s_k}}{(\rho_j+\varepsilon_j)A_j}, \quad
	F_{s_k g_\gamma} = \frac{\overline{s_k g_\gamma}}{(\rho_k+\varepsilon_k)A_k}, \quad
	F_{g_\gamma s_k} = \frac{\overline{g_\gamma s_k}}{4\beta V_\gamma}, \quad
	F_{g_\gamma g_{\gamma*}} = \frac{\overline{g_\gamma g_{\gamma*}}}{4\beta V_\gamma}.
\end{equation}
where $\rho+\varepsilon=1$ with $\rho$ being the single reflectivity and $\varepsilon$ is the emissivity and for volumes of participating media, $\kappa+\sigma_s=\beta$ where $\sigma_s$ is the single scattering coefficient. This normalization ensures $\mathbf{F}$ captures all outgoing radiation, whether emitted or reflected-scattered. Additionally, $\mathbf{F}$ satisfies reciprocity \citep{Daun2005}:
\begin{equation}
 \mathbf{E} \mathbf{F} = \mathbf{F}^T  \mathbf{E}
 \label{eq:reciprocity}
\end{equation}
where $\mathbf{E}$ is a diagonal matrix of equivalent emission areas with $\mathbf{E}_{ii}=(\rho_i+\varepsilon_i) A_i$ for $i\leq m$ and $\mathbf{E}_{ii}=4 \beta_i V_i$ for $i>m$, where the full capacity to emit and reflect-scatter is used.

Furthermore, in the general case, $\mathbf{F}$ will have some zero entries, i.e., physical elements without direct view of each other, meaning $\mathbf{F}$ is not merely a positive matrix but a non-negative matrix. Therefore, for the non-singularity proof of Theorem~\ref{thm:M_nonsingular}, $\mathbf{F}$ must be \textit{irreducible} (strongly connected) \citep{HornJohnson2018}, where irreducibility corresponds to the physical requirement that the domain contains no isolated regions, that is, for any two elements $i$ and $j$, there exists a sequence of direct radiation exchanges $i \to k_1 \to k_2 \to \ldots \to j$ connecting them. This requirement can always be satisfied by treating isolated regions as separate radiative transfer problems.

For the present formulation, $\mathbf{F}$ should be derived exactly as if dealing with pure absorption, but always using the total capacity to emit and reflect-scatter instead of the absorption coefficient. Then after $\mathbf{F}$ has been obtained, any part of the capacity to absorb may be converted into a capacity for reflection-scattering elementwise: A single $\mathbf{F}$ covers all combinations of $\kappa_i+\sigma_{\mathrm{s},i}$ which sum to $\beta_i$, since grey radiation carries no information about its origin.

\subsection{The Exchange Factor Formulation}

The column constant $(m+n)\times(m+n)$ blocked single reflection-scattering matrix $\mathbf{B}$ is defined as:
\begin{equation}
	\mathbf{B} = 
	\begin{bmatrix}
\mathbf{B}_\mathrm{ss} & \mathbf{B}_\mathrm{sg} \\
\mathbf{B}_\mathrm{gs} & \mathbf{B}_\mathrm{gg}
\end{bmatrix}
	=
	\begin{bmatrix}
	\rho_1 \quad \dots \quad \rho_m \quad \omega_1  \quad \dots \quad \omega_n \\
	\vdots \quad \ddots \quad\quad \vdots \quad\quad \vdots \quad \ddots \quad \vdots \\
	\rho_1 \quad \dots  \quad \rho_m \quad \omega_1  \quad \dots  \quad \omega_n
	\end{bmatrix}
\end{equation}
where $\rho_i$ is the reflectivity of boundary element $i$ and $\omega_i=\sigma_{\mathrm{s},i}/\beta_i$ is the single scattering albedo of volume element $i$. 

The absorption matrix $\mathbf{A}$ and the reflection-scattering matrix $\mathbf{R}$ are expressed as:
\begin{equation}
	\mathbf{A} = \mathbf{F} \circ (\mathbf{1} - \mathbf{B})\label{eq:A}
\end{equation}
\begin{equation}
	\mathbf{R} = \mathbf{F} \circ \mathbf{B}
	\label{eq:R}
\end{equation}
The first matrix, $\mathbf{F}$, accounts for the probability of transmission, and the last matrices, $\mathbf{1} - \mathbf{B}$ or $\mathbf{B}$ account for absorption or reflection at the destination.

Next, based on fundamental radiative energy balances, which state that an energy source is equal to the total radiant power, minus the absorbed power, minus the reflected incident power, and that the emissive power equals the total radiant power, minus the reflected incident power, the matrices $\mathbf{C}$ and $\mathbf{D}$ are defined as:
\begin{equation}
	\mathbf{C} = \mathbf{I} - \mathbf{A}^\top - \mathbf{R}^\top
	\quad , \quad
	\mathbf{D} = \mathbf{I} - \mathbf{R}^\top
	\label{eq:C_and_D}
\end{equation}

Now, given a vector $\mathbf{e}$ of the emissive powers and a vector $\mathbf{q}$ of the source terms of each element, both in the units of Watts, where only one entry of these can be specified for each element, the mixed-boundary matrix $\mathbf{M}$ can be assembled row by row:
\begin{equation}
	\mathbf{M}[i,:] = \mathbf{C}[i,:] \quad , \quad \mathbf{M}[j,:] = \mathbf{D}[j,:]
\end{equation}
using $i$ for known entries of the source vector $\mathbf{q}_i$ and using $j$ for known entries of the emissive power vector $\mathbf{e}_j$, where $i$ and $j$ are mutually exclusive. Next, the linear system:
\begin{equation}
	\mathbf{M}\mathbf{j} = \mathbf{h}
	\label{eq:solve_for_j}
\end{equation}
is solved for the total radiant power $\mathbf{j}$ of each element, where $\mathbf{h}$ is a vector of known emissive powers or source terms, also in the units of Watts, depending on which is known, corresponding to the $i$ and $j$ rows of $\mathbf{M}$.

Knowing the total radiant power of each element, the absorbed incident power $\mathbf{g}_\mathrm{a}$ and the unknown terms of the emissive power vector $\mathbf{e}$ and the source term vector $\mathbf{q}$ can be determined from:
\begin{equation}
	\mathbf{g}_\mathrm{a} = \mathbf{A}^\top \mathbf{j} 
	\quad , \quad
	\mathbf{e}_i = \mathbf{q}_i + \mathbf{g}_{\mathrm{a},i}
	\quad, \quad 
	\mathbf{q}_j = \mathbf{e}_j - \mathbf{g}_{\mathrm{a},j}
	\label{eq:emissive_source}
\end{equation}
where $i$ and $j$ are again mutually exclusive. The reflected-scattered power $\mathbf{r}$ and the total incident power $\mathbf{g}$ can be determined from:
\begin{equation}
	\mathbf{r} = \mathbf{R}^\top \mathbf{j} \quad , \quad
	\mathbf{g} = \mathbf{g}_\mathrm{a} + \mathbf{r}
	\label{eq:r_and_g}
\end{equation}
The total intensity of volumes and of Lambertian surfaces can be determined from:
\begin{equation}
	\mathbf{i}_{\mathrm{V},i} = \frac{\mathbf{j}_{\mathrm{V},i}}{4\pi V_i} 
	\quad , \quad
	\mathbf{i}_{\mathrm{S},i} = \frac{\mathbf{j}_{\mathrm{S},i}}{\pi A_i}
\end{equation}
Finally the temperature of each element can be determined from its emissive power using $\varepsilon$ and $\kappa$ and the Stefan-Boltzmann law:
\begin{equation}
	T_{\mathrm{s},i} = \left( \frac{\mathbf{e}_{\mathrm{s},i} }{ \varepsilon_{\mathrm{s},i}\sigma A_{\mathrm{s},i} }\right)^{1/4}
	\quad , \quad
	T_{\mathrm{g},i} = \left( \frac{\mathbf{e}_{\mathrm{g},i} }{4 \kappa_i\sigma V_{\mathrm{g},i} n_i^2 }\right)^{1/4}	
\label{eq:temperature}
\end{equation}
where $\sigma$ is the Stefan-Boltzmann constant and $n_i$ is the refractive index of the medium of element $i$.

As mentioned in the introduction, for a medium with varying refractive index, $\mathbf{F}$ should be obtained by solving differential equations for each ray trajectory. While differential equations require a continuously changing refractive index, the proposed formulation uses a discrete mesh of the domain. For this reason, a mean value of the refractive index in each volume element should be used in equation \eqref{eq:temperature} to calculate the temperature. The ray trajectories in a medium with variable refractive index are found by solving the following vector differential equation for each ray \citep{BornWolfOptics1980}:
\begin{equation}
	\frac{d}{dS}\left(n \frac{d \mathbf{x}}{dS}\right) = \nabla  n
\end{equation}
where $\mathbf{x}$ is the position and $S$ is the distance along a ray path and $\nabla $ is the gradient operator. Since the proposed method uses only the points of emission and first interaction, everything regarding the geometric ray trajectories is handled during the ray tracing step. This distinction is natural: the continuous differential equation governs ray propagation physics, while the discrete exchange factors capture the resulting energy transfer between finite elements.

\section{Matrix Properties}
\label{sec:matrix_props}

This section will explore the properties of the matrices of the proposed exchange factor formulation. These properties will subsequently be used to establish conditions which guarantee physical correctness.

\subsection{Spectral Radii of the Absorption and Reflection-Scattering Matrices}

This section determines bounds on the spectral radii of $\mathbf{A}$ and $\mathbf{R}$.

\begin{theorem}[Spectral Radius of Combined Absorption and Reflection-Scattering]
\label{thm:spectral_radius_unity}
Let $\mathbf{F}$ be the row stochastic and irreducible exchange factor matrix, and let $\mathbf{A}$ and $\mathbf{R}$ be the absorption and reflection-scattering matrices defined by the exchange factor formulation. Then:
\begin{enumerate}
\item The sum $\mathbf{A} + \mathbf{R}$ can be expressed as
\begin{equation}
\mathbf{A} + \mathbf{R} = \mathbf{F}
\label{eq:A_plus_R_form}
\end{equation}
\item The spectral radius satisfies $\rho(\mathbf{A} + \mathbf{R}) = 1$
\item The vector $\mathbf{v} = \mathbf{1}$ is the Perron eigenvector corresponding to the positive real eigenvalue $\lambda = 1$
\end{enumerate}
\end{theorem}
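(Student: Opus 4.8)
The plan is to establish the three claims in the order (1), (3), (2): the closed form \eqref{eq:A_plus_R_form} and the eigen-identity are short algebraic facts that then feed a Perron--Frobenius argument for the spectral radius. For (1) I would start from \eqref{eq:A} and \eqref{eq:R} and use distributivity of the Hadamard product, $\mathbf{A}+\mathbf{R} = (\mathbf{1}-\mathbf{B})^T\circ\mathbf{S}_\infty\circ\big[(\mathbf{1}-\mathbf{B})+\mathbf{B}\big] = (\mathbf{1}-\mathbf{B})^T\circ\mathbf{S}_\infty$, since $(\mathbf{1}-\mathbf{B})+\mathbf{B}=\mathbf{1}$ is the Hadamard identity. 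The structural point is that $\mathbf{B}$ is column constant, so $(\mathbf{1}-\mathbf{B})^T$ is \emph{row} constant with its $i$-th row equal to $1-\mathbf{b}_i$ everywhere; Hadamard-multiplying such a matrix into $\mathbf{S}_\infty$ just scales the $i$-th row of $\mathbf{S}_\infty$ by $1-\mathbf{b}_i$, which is exactly left multiplication by the diagonal matrix $\mathbf{P}$ with $\mathbf{P}_{ii}=1-\mathbf{b}_i$. Hence $\mathbf{A}+\mathbf{R}=\mathbf{P}\,\mathbf{S}_\infty=\mathbf{P}(\mathbf{I}-\mathbf{K})^{-1}\mathbf{F}$, which is \eqref{eq:A_plus_R_form}.

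For (3) I would verify $(\mathbf{A}+\mathbf{R})(\mathbf{1}-\mathbf{b})=\mathbf{1}-\mathbf{b}$ directly. The one identity needed is $\mathbf{F}(\mathbf{1}-\mathbf{b})=(\mathbf{I}-\mathbf{K})\mathbf{1}$: row-stochasticity gives $\mathbf{F}\mathbf{1}=\mathbf{1}$, while the column-constant structure of $\mathbf{B}$ gives $(\mathbf{F}\mathbf{b})_i=\sum_j\mathbf{F}_{ij}\mathbf{b}_j=\sum_j\mathbf{K}_{ij}=(\mathbf{K}\mathbf{1})_i$, i.e. $\mathbf{F}\mathbf{b}=\mathbf{K}\mathbf{1}$. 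Substituting into \eqref{eq:A_plus_R_form} the resolvent cancels: $(\mathbf{A}+\mathbf{R})(\mathbf{1}-\mathbf{b})=\mathbf{P}(\mathbf{I}-\mathbf{K})^{-1}(\mathbf{I}-\mathbf{K})\mathbf{1}=\mathbf{P}\mathbf{1}=\mathbf{1}-\mathbf{b}$, so $\lambda=1$ is an eigenvalue with eigenvector $\mathbf{v}=\mathbf{1}-\mathbf{b}$.

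For (2) the goal is to identify $\lambda=1$ as the Perron root. By \eqref{eq:A_plus_R_form}, $\mathbf{A}+\mathbf{R}$ is a product of nonnegative matrices: $\mathbf{P}\ge\mathbf{0}$, $(\mathbf{I}-\mathbf{K})^{-1}=\sum_{i\ge 0}\mathbf{K}^i\ge\mathbf{0}$ (nonnegative and convergent by Theorem \ref{thm:convergence}, since $\rho(\mathbf{K})<1$), and $\mathbf{F}\ge\mathbf{0}$; hence $\mathbf{A}+\mathbf{R}\ge\mathbf{0}$. Because $\mathbf{b}_i\le\max_j(\mathbf{B})<1$ by \eqref{eq:rho_K_thm}, the diagonal of $\mathbf{P}$ is strictly positive and $\mathbf{v}=\mathbf{1}-\mathbf{b}>\mathbf{0}$. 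Irreducibility transfers as in the proof of Theorem \ref{thm:convergence}: $\mathbf{S}_\infty=(\mathbf{I}-\mathbf{K})^{-1}\mathbf{F}$ inherits the strong connectivity of $\mathbf{F}$ (the added series terms are nonnegative, so the support only grows), and left multiplication by the positive-diagonal $\mathbf{P}$ leaves the zero pattern unchanged, so $\mathbf{A}+\mathbf{R}$ is irreducible. An irreducible nonnegative matrix has, up to scaling, a unique nonnegative eigenvector, its Perron eigenvector, which is strictly positive and belongs to the simple eigenvalue equal to the spectral radius; since $\mathbf{v}>\mathbf{0}$ is an eigenvector with eigenvalue $1$, it must be that Perron eigenvector and $\rho(\mathbf{A}+\mathbf{R})=1$. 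As an alternative to invoking irreducibility here, the elementary two-sided bound $\min_i(M\mathbf{v})_i/\mathbf{v}_i\le\rho(M)\le\max_i(M\mathbf{v})_i/\mathbf{v}_i$ for $M\ge\mathbf{0}$ and $\mathbf{v}>\mathbf{0}$ already pins $\rho(\mathbf{A}+\mathbf{R})$ to $1$.

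I expect the only delicate point to be the structural reduction in step (1) — recognizing that column-constancy of $\mathbf{B}$ turns the Hadamard factor $(\mathbf{1}-\mathbf{B})^T$ into a diagonal left-multiplier — together with marshalling the two Perron--Frobenius prerequisites in step (2), namely $\mathbf{1}-\mathbf{b}>\mathbf{0}$ and irreducibility of $\mathbf{A}+\mathbf{R}$, both of which ultimately rest on the strict bound $\max_j(\mathbf{B})<1$ carried over from Theorem \ref{thm:convergence}.
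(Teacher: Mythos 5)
Your proposal is correct and follows essentially the same route as the paper: the same key identity $\mathbf{F}(\mathbf{1}-\mathbf{b})=(\mathbf{I}-\mathbf{K})\mathbf{1}$, the same collapse of the Hadamard sum to $\mathbf{A}+\mathbf{R}=\mathbf{P}(\mathbf{I}-\mathbf{K})^{-1}\mathbf{F}$, and the same Perron--Frobenius identification of $\lambda=1$ with the spectral radius. If anything, your treatment of claim (2) is slightly more careful than the paper's, which simply asserts that a positive eigenvector must be the Perron eigenvector, whereas you explicitly supply the nonnegativity and irreducibility of $\mathbf{A}+\mathbf{R}$ (or, alternatively, the elementary two-sided bound), closing a step the paper leaves implicit.
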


\begin{proof}
The result is established through two key relationships.

\textbf{Step 1:} Recognizing $\mathbf{A}+\mathbf{R}=\mathbf{F}$:
\begin{align}
	\mathbf{A} + \mathbf{R} &= \mathbf{F} \circ (\mathbf{1}-\mathbf{B})+\mathbf{F} \circ \mathbf{B} = \mathbf{F}
\label{eq:A_plus_R_is_F}
\end{align}

\textbf{Step 2:} Combining with $\mathbf{F}\mathbf{1}=\mathbf{1}$ yields the final eigenvalue identity:
\begin{equation}
(\mathbf{A} + \mathbf{R})\mathbf{1} = \mathbf{F}\mathbf{1} = \mathbf{1}
\end{equation}

Therefore, $\mathbf{A}+\mathbf{R}$ is row stochastic, and $\mathbf{v} = \mathbf{1}$ is an eigenvector with real eigenvalue $\lambda = 1$. Since all components of $\mathbf{v}$ are positive, this is the Perron eigenvector. By the Perron-Frobenius theorem, the corresponding eigenvalue equals the spectral radius \citep{HornJohnson2018}, hence $\rho(\mathbf{A} + \mathbf{R}) = 1$.
\end{proof}

\begin{theorem}[Upper Bound on Spectral Radius of the Reflection-Scattering Matrix]
\label{thm:reflection_upper_bound}
The spectral radius of the reflection-scattering matrix $\mathbf{R} = \mathbf{F} \circ \mathbf{B}$ satisfies
\begin{equation}
\rho(\mathbf{R}) \leq \mathbf{b}_{\max},
\label{eq:rho_R_upper}
\end{equation}
where $\mathbf{b}_{\max} = \max_j \mathbf{b}_j$, with equality when $\mathbf{b}_j = \mathbf{b}_{\max}$ uniformly. Consequently, $\mathbf{D} = \mathbf{I} - \mathbf{R}^\top$ is invertible whenever $\mathbf{b}_{\max} < 1$.
\end{theorem}

\begin{proof}
Since $\mathbf{B}$ is column-constant with column $j$ equal to $\mathbf{b}_j$, the entries of $\mathbf{R}$ are $\mathbf{R}_{ij} = \mathbf{b}_j \mathbf{F}_{ij}$, and each row sum satisfies
\begin{equation*}
\sum_j \mathbf{R}_{ij} = \sum_j \mathbf{b}_j \mathbf{F}_{ij} \leq \mathbf{b}_{\max} \sum_j \mathbf{F}_{ij} = \mathbf{b}_{\max},
\end{equation*}
using row-stochasticity of $\mathbf{F}$ in the last step. Therefore $\|\mathbf{R}\|_\infty \leq \mathbf{b}_{\max}$, and the bound \eqref{eq:rho_R_upper} follows from $\rho(\mathbf{R}) \leq \|\mathbf{R}\|_\infty$ \citep{HornJohnson2018}. When $\mathbf{b}_j = \mathbf{b}_{\max}$ uniformly, $\mathbf{R} = \mathbf{b}_{\max} \mathbf{F}$ gives $\rho(\mathbf{R}) = \mathbf{b}_{\max} \rho(\mathbf{F}) = \mathbf{b}_{\max}$.
\end{proof}

\subsection{Non-negativity of the Inverse of the Mixed Boundary Matrix}
\label{sec:nonnegative_inv}

To ensure non-negative radiation for the proposed formulation, the inverse of the mixed boundary matrix $\mathbf{M}$ of equation \eqref{eq:solve_for_j} should be non-negative. This requires establishing that both component matrices $\mathbf{C}$ and $\mathbf{D}$ are M-matrices and that $\mathbf{M}$ is non-singular.

\begin{theorem}[M-matrix Property of $\mathbf{C}$]
\label{thm:C_matrix}
The matrix $\mathbf{C} = \mathbf{I} - \mathbf{A}^\top - \mathbf{R}^\top$ is a singular M-matrix with the following properties:
\begin{enumerate}
\item $\mathbf{C}$ has zero as the eigenvalue with the smallest real part, with left eigenvector $\mathbf{1}^\top$
\item The diagonal entries satisfy $\mathbf{C}_{ii} \geq 0$ for all $i$
\item The off-diagonal entries satisfy $\mathbf{C}_{ij} \leq 0$ for all $i \neq j$
\end{enumerate}
\end{theorem}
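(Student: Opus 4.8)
The plan is to write $\mathbf{C} = \mathbf{I} - (\mathbf{A}+\mathbf{R})^T$ and read off all three properties from the non-negativity of $\mathbf{A}+\mathbf{R}$ together with the spectral facts already established in Theorems~\ref{thm:convergence} and~\ref{thm:spectral_radius_unity}. First I would record non-negativity: under the convergence hypothesis $\rho(\mathbf{K})<1$, the steady state path matrix $\mathbf{S}_\infty = (\mathbf{I}-\mathbf{K})^{-1}\mathbf{F} = \sum_{i\ge 0}\mathbf{K}^i\mathbf{F}$ is a convergent sum of products of non-negative matrices, hence $\mathbf{S}_\infty\ge\mathbf{0}$; since every entry of $\mathbf{B}$ lies in $[0,1)$, both $\mathbf{B}$ and $\mathbf{1}-\mathbf{B}$ are non-negative, so $\mathbf{A}$ and $\mathbf{R}$ (Hadamard products of non-negative matrices) are non-negative, and therefore so is $\mathbf{N}:=(\mathbf{A}+\mathbf{R})^T$. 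Property~3 is then immediate: for $i\ne j$, $\mathbf{C}_{ij} = -\mathbf{N}_{ij} = -(\mathbf{A}+\mathbf{R})_{ji}\le 0$, i.e.\ $\mathbf{C}$ has the Z-sign pattern.

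For property~2 I would use the elementary Perron--Frobenius fact that no diagonal entry of a non-negative matrix exceeds its spectral radius: the $1\times1$ principal submatrix $[(\mathbf{A}+\mathbf{R})_{ii}]$ has spectral radius $(\mathbf{A}+\mathbf{R})_{ii}$, which is dominated by $\rho(\mathbf{A}+\mathbf{R})$ \citep{HornJohnson2018}. By Theorem~\ref{thm:spectral_radius_unity}, $\rho(\mathbf{A}+\mathbf{R})=1$, so $(\mathbf{A}+\mathbf{R})_{ii}\le 1$ and $\mathbf{C}_{ii} = 1-(\mathbf{A}+\mathbf{R})_{ii}\ge 0$. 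This is the only slightly delicate point in the proof: the diagonal bound does \emph{not} follow from the row sums of $\mathbf{A}+\mathbf{R}$, which can exceed unity because $[\mathbf{R}\mathbf{1}]_i = [(\mathbf{A}+\mathbf{R})\mathbf{b}]_i$ adds a non-negative amount to the row sum $1-\mathbf{b}_i$ of $\mathbf{A}$; a Gershgorin estimate on $\mathbf{C}$ is therefore too weak, and the submatrix-monotonicity argument is genuinely needed.

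For property~1, I would transpose the Perron relation $(\mathbf{A}+\mathbf{R})(\mathbf{1}-\mathbf{b})=\mathbf{1}-\mathbf{b}$ of Theorem~\ref{thm:spectral_radius_unity} to get $(\mathbf{1}-\mathbf{b})^T\mathbf{N}=(\mathbf{1}-\mathbf{b})^T$, whence $(\mathbf{1}-\mathbf{b})^T\mathbf{C}=\mathbf{0}^T$; thus $0$ is an eigenvalue of $\mathbf{C}$ with left eigenvector $(\mathbf{1}-\mathbf{b})^T$ (moreover the Perron left eigenvector, $\mathbf{A}+\mathbf{R}=\mathbf{P}\mathbf{S}_\infty$ being irreducible with $\mathbf{P}$ a positive diagonal). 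That $0$ has the smallest real part follows since every eigenvalue of $\mathbf{C}$ has the form $1-\mu$ with $\mu\in\mathrm{spec}(\mathbf{N})$, and $|\mu|\le\rho(\mathbf{N})=1$ forces $\mathrm{Re}(1-\mu)\ge 0$, with equality at the Perron value $\mu=1$. Finally, assembling these observations, $\mathbf{C}=s\mathbf{I}-\mathbf{N}$ with $\mathbf{N}\ge\mathbf{0}$ and $s=1=\rho(\mathbf{N})$, which is exactly the definition of a singular M-matrix \citep{HornJohnson2018}, and properties~2 and~3 are its accompanying sign conditions. The remaining work is purely bookkeeping on top of the two cited theorems.
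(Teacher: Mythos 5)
Your proof is correct, and for properties 1 and 3 it follows essentially the paper's route: the Z-sign pattern is read off directly from the non-negativity of $\mathbf{A}+\mathbf{R}$, and the zero eigenvalue with left eigenvector $(\mathbf{1}-\mathbf{b})^T$ together with the shift $\lambda\mapsto 1-\lambda$ and $\rho(\mathbf{A}+\mathbf{R})=1$ from Theorem~\ref{thm:spectral_radius_unity} gives that all eigenvalues of $\mathbf{C}$ have non-negative real part with zero attained at the Perron value.

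Where you genuinely diverge is property 2. The paper does not use the principal-submatrix monotonicity $(\mathbf{A}+\mathbf{R})_{ii}\le\rho(\mathbf{A}+\mathbf{R})=1$; instead it expands the identity $\mathbf{C}^T(\mathbf{1}-\mathbf{b})=\mathbf{0}$ componentwise and solves for the diagonal, obtaining the explicit representation
\begin{equation}
\mathbf{C}_{ii} \;=\; 1-(\mathbf{A}+\mathbf{R})_{ii} \;=\; \sum_{j\neq i}(\mathbf{A}+\mathbf{R})_{ij}\,\frac{1-\mathbf{b}_j}{1-\mathbf{b}_i}\;\ge\;0 ,
\end{equation}
which uses only the eigenvector relation, the non-negativity of $\mathbf{A}+\mathbf{R}$, and $\mathbf{b}_i<1$. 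Your submatrix argument is equally valid and somewhat more generic: it needs only $\mathbf{A}+\mathbf{R}\ge\mathbf{0}$ and $\rho(\mathbf{A}+\mathbf{R})=1$, and would apply verbatim to any matrix of the form $s\mathbf{I}-\mathbf{N}$ with $\mathbf{N}\ge\mathbf{0}$ and $s=\rho(\mathbf{N})$, which also makes your closing observation (that this is exactly the definition of a singular M-matrix) immediate. The paper's weighted-row-sum identity buys a little more information, namely an exact non-negative expression for each diagonal entry of $\mathbf{C}$ in terms of off-diagonal entries of $\mathbf{A}+\mathbf{R}$, but as a proof of $\mathbf{C}_{ii}\ge 0$ the two routes are interchangeable. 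Your side remark that a plain Gershgorin or unweighted row-sum bound would be insufficient is well taken (the row sums of $\mathbf{A}+\mathbf{R}$ can indeed exceed unity for non-uniform $\mathbf{b}$), and the paper implicitly avoids this by weighting with $1-\mathbf{b}_j$.
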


\begin{proof}
\textbf{Step 1:} The zero eigenvalue is established as the smallest real part eigenvalue.
From the row sum properties established in previous theorems:
\begin{align}
\mathbf{A}\mathbf{1} + \mathbf{R}\mathbf{1} &= \mathbf{1}\\
\mathbf{1}-\mathbf{A}\mathbf{1} + \mathbf{R}\mathbf{1} &= \mathbf{0} \\
\mathbf{C}^\top\mathbf{1} &= \mathbf{0}
\end{align}
Therefore, $\mathbf{1}^\top$ is a left eigenvector of $\mathbf{C}$ with eigenvalue zero. Since $\mathbf{C} = \mathbf{I} - \mathbf{A}^\top - \mathbf{R}^\top$, if $\lambda$ is an eigenvalue of $\mathbf{A}+\mathbf{R}$, then $1-\lambda$ is an eigenvalue of $\mathbf{C}$. Since $\mathbf{A}+\mathbf{R}=\mathbf{F}$ is row stochastic, with spectral radius from the real eigenvalue $\lambda=1$, this shows that zero is the smallest real part of any eigenvalue of $\mathbf{C}$.

\textbf{Step 2:} The non-negative diagonal entries are established.
From the eigenvalue equation $\mathbf{C}^\top\mathbf{1} = \mathbf{0}$:
\begin{align}
0 &= [\mathbf{C}^\top\mathbf{1}]_i \\
&= (1 - (\mathbf{A}+\mathbf{R})_{ii}) - \sum_{j \neq i} (\mathbf{A}+\mathbf{R})_{ij}
\end{align}
Solving for the diagonal:
\begin{equation}
\mathbf{C}_{ii} = 1 - (\mathbf{A}+\mathbf{R})_{ii} = \sum_{j \neq i} (\mathbf{A}+\mathbf{R})_{ij} \geq 0
\end{equation}
since all terms are non-negative.

\textbf{Step 3:} Off-diagonal entries are non-positive by construction since $\mathbf{C}_{ij} = -(\mathbf{A}^\top+\mathbf{R}^\top)_{ij} \leq 0$ for $i \neq j$.
\end{proof}

\begin{theorem}[M-matrix Property of $\mathbf{D}$]
\label{thm:D_matrix}
The matrix $\mathbf{D} = \mathbf{I} - \mathbf{R}^\top$ is a non-singular M-matrix with the following properties:
\begin{enumerate}
\item All eigenvalues have real parts bounded below: $\mathrm{Re}(\lambda_{\mathbf{D}}) \geq 1 - \rho(\mathbf{R})$
\item The diagonal entries satisfy $\mathbf{D}_{ii} \geq 0$ for all $i$
\item The off-diagonal entries satisfy $\mathbf{D}_{ij} \leq 0$ for all $i \neq j$
\end{enumerate}
As long as $\rho(\mathbf{R}) < 1$, $\mathbf{D}$ is guaranteed to be non-singular.
\end{theorem}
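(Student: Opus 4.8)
The plan is to recognize $\mathbf{D} = \mathbf{I} - \mathbf{R}^T$ as a matrix of the form $s\mathbf{I} - \mathbf{P}$ with $s = 1$ and $\mathbf{P} = \mathbf{R}^T$, and then invoke the standard characterization of non-singular M-matrices. First I would record that $\mathbf{R}$ is entrywise non-negative: by equation \eqref{eq:R} it is a Hadamard product of $(\mathbf{1}-\mathbf{B})^T$, $\mathbf{S}_\infty$, and $\mathbf{B}$, each of which is non-negative --- $\mathbf{B}$ collects reflectivities and albedos in $[0,1)$, $\mathbf{1}-\mathbf{B}$ their complements, and $\mathbf{S}_\infty = (\mathbf{I}-\mathbf{K})^{-1}\mathbf{F} = \sum_{i\geq 0}\mathbf{K}^i\mathbf{F}$ is a convergent sum of non-negative matrices by Theorem~\ref{thm:convergence} (since $\mathbf{K}=\mathbf{F}\circ\mathbf{B}\geq\mathbf{0}$ and $\rho(\mathbf{K})<1$). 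Hence $\mathbf{R}^T \geq \mathbf{0}$. The hypothesis $\rho(\mathbf{R}) < 1$ is supplied either by the physical realizability constraint of Theorem~\ref{thm:reflection_upper_bound} or, in the uniform case, by $\rho(\mathbf{R}) = \gamma < 1$ from Theorem~\ref{thm:uniform_beta}; with this in hand, $\mathbf{D} = 1\cdot\mathbf{I} - \mathbf{R}^T$ has $\mathbf{R}^T$ non-negative with $\rho(\mathbf{R}^T) = \rho(\mathbf{R}) < 1$, which is exactly the definition of a non-singular M-matrix.

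Next I would peel off the three enumerated properties. Property (3), $\mathbf{D}_{ij} = -\mathbf{R}_{ji} \leq 0$ for $i \neq j$, is immediate from $\mathbf{R} \geq \mathbf{0}$. Property (1) follows from the spectral mapping theorem: the eigenvalues of $\mathbf{D}$ are exactly $\{1 - \mu : \mu \text{ an eigenvalue of } \mathbf{R}^T\}$, and $\mathbf{R}^T$ has the same eigenvalues as $\mathbf{R}$, so $\text{Re}(\lambda_{\mathbf{D}}) = 1 - \text{Re}(\mu) \geq 1 - |\mu| \geq 1 - \rho(\mathbf{R})$. Non-singularity then drops out at once: since $\rho(\mathbf{R}) < 1$, every eigenvalue of $\mathbf{D}$ has real part at least $1 - \rho(\mathbf{R}) > 0$, so $0$ is not an eigenvalue of $\mathbf{D}$; equivalently, $(\mathbf{I}-\mathbf{R}^T)^{-1} = \sum_{k\geq 0}(\mathbf{R}^T)^k$ converges by the Neumann series theorem.

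For property (2) I would use that the diagonal entries of a non-negative matrix are dominated by its spectral radius: viewing $\mathbf{R}_{ii}$ as a $1\times 1$ principal submatrix of $\mathbf{R}$ and using monotonicity of the Perron root over principal submatrices gives $\mathbf{R}_{ii} \leq \rho(\mathbf{R}) < 1$, hence $\mathbf{D}_{ii} = 1 - \mathbf{R}_{ii} > 0$; alternatively, a positive diagonal is one of the standard equivalent properties of a non-singular M-matrix, so it follows a posteriori from the identification made in the first step. I do not anticipate a genuine obstacle here --- the statement is essentially a dictionary translation of ``$\mathbf{R}\geq\mathbf{0}$ and $\rho(\mathbf{R})<1$'' into M-matrix language, and it parallels the proof of Theorem~\ref{thm:C_matrix} with the crucial simplification that the relevant spectral radius is now strictly below unity rather than equal to it. The only point requiring care is to invoke $\rho(\mathbf{R}) < 1$ from the appropriate earlier result rather than assume it, and to note that without it (e.g.\ $\rho(\mathbf{R}) = 1$) $\mathbf{D}$ would degenerate to a singular M-matrix --- which is precisely why the physical realizability constraint of Theorem~\ref{thm:reflection_upper_bound} is imposed upstream and why it is listed here as the standing assumption guaranteeing non-singularity.
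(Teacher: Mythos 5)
Your proposal is correct and follows essentially the same route as the paper: the eigenvalue bound $\text{Re}(\lambda_{\mathbf{D}}) \geq 1-\rho(\mathbf{R})$ via the spectral shift $\mathbf{D}=\mathbf{I}-\mathbf{R}^T$, off-diagonal non-positivity directly from $\mathbf{R}\geq\mathbf{0}$, and non-singularity from $\rho(\mathbf{R})<1$ as supplied by Theorem~\ref{thm:reflection_upper_bound}. The only minor difference is the diagonal entries: the paper writes $\mathbf{D}=\mathbf{C}+\mathbf{A}^T$ and reuses $\mathbf{C}_{ii}\geq 0$ from Theorem~\ref{thm:C_matrix}, whereas you bound $\mathbf{R}_{ii}\leq\rho(\mathbf{R})<1$ via principal-submatrix monotonicity of the Perron root; both arguments are valid.
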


\begin{proof}
\textbf{Step 1:} Eigenvalue bounds.
Since $\mathbf{D} = \mathbf{I} - \mathbf{R}^\top$, if $\lambda$ is an eigenvalue of $\mathbf{R}$, then $1 - \lambda$ is an eigenvalue of $\mathbf{D}$. From Theorem~\ref{thm:reflection_upper_bound}:
\begin{equation}
\text{Re}(\lambda_{\mathbf{D}}) \geq 1 - \rho(\mathbf{R}) \geq 1 - \|\mathbf{R}\|_\infty
\end{equation}

\textbf{Step 2:} Non-negative diagonal entries.
Since $\mathbf{D} = \mathbf{C} + \mathbf{A}^\top$ and both $\mathbf{C}$ (from Theorem~\ref{thm:C_matrix}) and $\mathbf{A}^\top$ have non-negative diagonal entries:
\begin{equation}
\mathbf{D}_{ii} = \mathbf{C}_{ii} + \mathbf{A}_{ii} \geq 0
\end{equation}

\textbf{Step 3:} Off-diagonal entries are non-positive by construction since $\mathbf{D}_{ij} = -\mathbf{R}^\top_{ij} \leq 0$ for $i \neq j$.
\end{proof}

\begin{theorem}[Non-singularity of Mixed Boundaries]
\label{thm:M_nonsingular}
Let $\mathbf{A}, \mathbf{R}, \mathbf{C}, \mathbf{D}$ be the system matrices defined by the proposed exchange factor formulation, let $\mathbf{F}$ be the irreducible and row stochastic exchange factor matrix, and let $\mathbf{1}$ be the Perron eigenvector of $\mathbf{A}+\mathbf{R}$. Then any matrix $\mathbf{M}$ obtained by replacing at least one row of $\mathbf{C}$ with the corresponding row from $\mathbf{D}$ is non-singular.
\end{theorem}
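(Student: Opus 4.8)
The plan is to write $\mathbf{M}$ in the form $\mathbf{I}-\mathbf{N}$ with $\mathbf{N}\geq\mathbf{0}$ and then to show $\rho(\mathbf{N})<1$, from which non-singularity of $\mathbf{M}=\mathbf{I}-\mathbf{N}$ is immediate. Let $J\subseteq\{1,\dots,m+n\}$ be the non-empty set of row indices where $\mathbf{C}$ is replaced by $\mathbf{D}$, and let $\boldsymbol{\Pi}_J$ be the diagonal $0$--$1$ matrix with $(\boldsymbol{\Pi}_J)_{ii}=1$ exactly for $i\in J$. Since $\mathbf{D}=\mathbf{I}-\mathbf{R}^T=\mathbf{C}+\mathbf{A}^T$ (the identity already used in Theorem~\ref{thm:D_matrix}), replacing row $i$ of $\mathbf{C}$ with row $i$ of $\mathbf{D}$ amounts to adding row $i$ of $\mathbf{A}^T$, so $\mathbf{M}=\mathbf{C}+\boldsymbol{\Pi}_J\mathbf{A}^T$. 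Substituting $\mathbf{C}=\mathbf{I}-\mathbf{A}^T-\mathbf{R}^T$ then gives
\begin{equation}
\mathbf{M}=\mathbf{I}-\mathbf{N},\qquad \mathbf{N}:=(\mathbf{I}-\boldsymbol{\Pi}_J)\mathbf{A}^T+\mathbf{R}^T\geq\mathbf{0}.
\end{equation}
Because $\mathbf{M}$ is assembled row-by-row from the Z-matrices $\mathbf{C}$ and $\mathbf{D}$ (Theorems~\ref{thm:C_matrix} and~\ref{thm:D_matrix}), it is itself a Z-matrix; hence once $\rho(\mathbf{N})<1$ is established, $\mathbf{M}$ is a non-singular M-matrix and $\mathbf{M}^{-1}=\sum_{k\geq0}\mathbf{N}^{k}\geq\mathbf{0}$, which is the property needed in Section~\ref{sec:physical_correct}.

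To bound $\rho(\mathbf{N})$, I would compare $\mathbf{N}$ with $\mathbf{N}_0:=\mathbf{A}^T+\mathbf{R}^T=(\mathbf{A}+\mathbf{R})^T$: plainly $\mathbf{0}\leq\mathbf{N}\leq\mathbf{N}_0$ entrywise, with $\mathbf{N}_0-\mathbf{N}=\boldsymbol{\Pi}_J\mathbf{A}^T$. By Theorem~\ref{thm:spectral_radius_unity} together with equations~\eqref{eq:A_plus_R_form} and~\eqref{eq:A_decomposition}, $\mathbf{A}+\mathbf{R}=\mathbf{P}\mathbf{S}_\infty$ and $\mathbf{A}=\mathbf{P}\mathbf{S}_\infty\mathbf{P}$, where $\mathbf{P}$ is the positive diagonal matrix with entries $1-\mathbf{b}_i>0$ and $\mathbf{S}_\infty$ is irreducible by Theorem~\ref{thm:convergence}. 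Since pre- and post-multiplication by a positive diagonal matrix preserves the zero pattern, $\mathbf{A}$ and $\mathbf{A}+\mathbf{R}$ are irreducible, so $\mathbf{N}_0=(\mathbf{A}+\mathbf{R})^T$ is irreducible with $\rho(\mathbf{N}_0)=\rho(\mathbf{A}+\mathbf{R})=1$. An irreducible non-negative matrix of order $\geq 2$ has no zero column, so $\mathbf{A}^T$ has no zero row, and as $J$ is non-empty this forces $\mathbf{N}_0-\mathbf{N}=\boldsymbol{\Pi}_J\mathbf{A}^T\neq\mathbf{0}$, i.e.\ $\mathbf{N}\neq\mathbf{N}_0$. (For $m+n=1$ the only admissible replacement gives $\mathbf{M}=\mathbf{D}$, non-singular by Theorem~\ref{thm:D_matrix}.)

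Invoking strict monotonicity of the Perron root of irreducible non-negative matrices \citep{HornJohnson2018} --- if $\mathbf{0}\leq\mathbf{N}\leq\mathbf{N}_0$, $\mathbf{N}\neq\mathbf{N}_0$, and $\mathbf{N}_0$ is irreducible, then $\rho(\mathbf{N})<\rho(\mathbf{N}_0)$ --- then yields $\rho(\mathbf{N})<1$. Therefore $1$ is not an eigenvalue of $\mathbf{N}$ and $\mathbf{M}=\mathbf{I}-\mathbf{N}$ is non-singular, completing the argument. I expect the delicate step to be exactly this strict inequality: the weak bound $\rho(\mathbf{N})\leq 1$ is immediate from $\mathbf{N}\leq\mathbf{N}_0$, but since $\mathbf{N}\geq\mathbf{0}$ its spectral radius is itself an eigenvalue, so $\rho(\mathbf{N})=1$ would make $\mathbf{M}$ singular. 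Strictness is available only by dominating $\mathbf{N}$ with a matrix that is \emph{irreducible} --- hence the detour through $\mathbf{N}_0=(\mathbf{A}+\mathbf{R})^T$ rather than any attempt to reason with $\mathbf{N}$ itself --- and by verifying that the row replacements genuinely perturb $\mathbf{N}$, which reduces to $\mathbf{A}$ having no zero column and so to the irreducibility of $\mathbf{A}$ inherited from $\mathbf{S}_\infty$.
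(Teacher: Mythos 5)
Your proposal is correct, but it proves the theorem by a genuinely different route than the paper. The paper works with null spaces: it shows $\mathbf{C}$ is singular with one-dimensional left kernel spanned by $\mathbf{1}-\mathbf{b}$ (Theorem~\ref{thm:C_matrix}), writes the row replacement as $\mathbf{M}=\mathbf{C}+\mathbf{e}_j(\mathbf{A}^T)_j$, and tests against that left null vector, obtaining $(\mathbf{1}-\mathbf{b})^T\mathbf{M}=(1-\mathbf{b}_j)(\mathbf{A}^T)_j\neq\mathbf{0}$ because $\mathbf{A}$ has no zero column; it then concludes the nullity drops from one to zero. You instead absorb all replaced rows at once via $\mathbf{M}=\mathbf{C}+\boldsymbol{\Pi}_J\mathbf{A}^T=\mathbf{I}-\mathbf{N}$ with $\mathbf{N}=(\mathbf{I}-\boldsymbol{\Pi}_J)\mathbf{A}^T+\mathbf{R}^T\geq\mathbf{0}$, dominate $\mathbf{N}$ entrywise by the irreducible matrix $(\mathbf{A}+\mathbf{R})^T$ of spectral radius one (Theorems~\ref{thm:convergence} and~\ref{thm:spectral_radius_unity}, with the zero pattern preserved by the positive diagonal $\mathbf{P}$), verify $\mathbf{N}\neq(\mathbf{A}+\mathbf{R})^T$ from the same no-zero-column property of $\mathbf{A}$, and invoke strict monotonicity of the Perron root to get $\rho(\mathbf{N})<1$. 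Both arguments rest on the same structural ingredients (irreducibility of $\mathbf{S}_\infty$ inherited by $\mathbf{A}$ and $\mathbf{A}+\mathbf{R}$, and the Perron eigenpair of $\mathbf{A}+\mathbf{R}$), but they buy different things. The paper's argument is shorter, though its final inference --- that excluding the single vector $\mathbf{1}-\mathbf{b}$ from $\mathrm{null}(\mathbf{M}^T)$ forces $\mathrm{nullity}(\mathbf{M}^T)<\mathrm{nullity}(\mathbf{C}^T)$ --- requires care, since a perturbed matrix's null space need not sit inside the original one's; your spectral-radius route avoids that delicacy entirely, handles any number of replaced rows uniformly rather than one at a time, and delivers more: $\rho(\mathbf{N})<1$ makes $\mathbf{M}$ a non-singular M-matrix with $\mathbf{M}^{-1}=\sum_{k\geq 0}\mathbf{N}^k\geq\mathbf{0}$, which is exactly the inverse-positivity invoked later in Theorem~\ref{thm:physical_correctness}, so your proof simultaneously supplies the non-negativity property the paper needs downstream.
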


\begin{proof}
\textbf{Step 1:} Application of established spectral properties. By Theorem \ref{thm:spectral_radius_unity}, $\mathbf{A}+\mathbf{R}$ has spectral radius $\rho(\mathbf{A}+\mathbf{R}) = 1$ with corresponding Perron eigenvector $\mathbf{1}$. Since $\mathbf{F}$ is irreducible, the matrix $\mathbf{A}+\mathbf{R}=\mathbf{F}$ is irreducible. By the Perron-Frobenius theorem, eigenvalue 1 has geometric multiplicity exactly 1 \citep{HornJohnson2018}.

\textbf{Step 2:} Positive column sums of $\mathbf{A}$. Since $\mathbf{F}$ is irreducible, every column $j$ has at least one positive entry (otherwise element $j$ would be unreachable). Since $0 < 1- \mathbf{b}_j \leq 1$, the entries of $\mathbf{A} = \mathbf{F}\circ(\mathbf{1}-\mathbf{B})$ inherit the positivity pattern of $\mathbf{F}$, so each column of $\mathbf{A}$ has at least one positive entry.

\textbf{Step 3:} Non-singularity of $\mathbf{M}$. By Theorem \ref{thm:C_matrix}, $\mathbf{C}^\top$ has nullity (dimension of null space) exactly 1 with \mbox{$\text{null}(\mathbf{C}^\top) = \mathrm{span}\{\mathbf{1}\}$}. Since $\mathrm{nullity}(\mathbf{C}) = \text{nullity}(\mathbf{C}^\top) = 1$, it is obtained that $\text{rank}(\mathbf{C}) = n-1$.

Let $\mathbf{M}$ be obtained by replacing at least one row of $\mathbf{C}$ with the corresponding row from $\mathbf{D}$. Without loss of generality, assume the $j$-th row is replaced. Then, since $\mathbf{D}=\mathbf{C}+\mathbf{A}^\top$:
\begin{equation}
\mathbf{M} = \mathbf{C} + \mathbf{e}_j (\mathbf{A}^\top)_j
\end{equation}
where $(\mathbf{A}^\top)_j$ denotes the $j$-th row of $\mathbf{A}^\top$ and $\mathbf{e}_j$ is the $j$-th standard basis vector.

Since $\mathbf{1}^\top \mathbf{C} = \mathbf{0}$, it is obtained that:
\begin{equation}
\mathbf{1}^\top \mathbf{M} = \mathbf{1}^\top \mathbf{C} + \mathbf{1}^\top (\mathbf{e}_j (\mathbf{A}^\top)_j) = (\mathbf{A}^\top)_j
\end{equation}

Since $\mathbf{A} \geq \mathbf{0}$ and step 2 ensures $\mathbf{A}$ has at least one positive entry in each column, it is obtained that $\mathbf{1}^\top \mathbf{M} \neq \mathbf{0}$.

Therefore $\mathbf{1} \notin \mathrm{null}(\mathbf{M}^\top)$, which means $\mathrm{nullity}(\mathbf{M}^\top) < \mathrm{nullity}(\mathbf{C}^\top) = 1$.

Since $\mathrm{nullity}(\mathbf{M}^\top) = \mathrm{nullity}(\mathbf{M})$, it is obtained that $\mathrm{nullity}(\mathbf{M}) = 0$, so $\mathbf{M}$ is non-singular.
\end{proof}

The irreducibility of the exchange factor matrix $\mathbf{F}$ is crucial for this result. In radiative transfer applications, $\mathbf{F}$ represents connectivity within a physical domain. The matrix $\mathbf{F}$ is irreducible if and only if the underlying domain has no isolated regions, that is, for any two regions $i$ and $j$, there exists a sequence of direct sight lines connecting them. This physical connectivity condition naturally ensures the mathematical irreducibility required for the theorem.

\section{Physical Correctness}
\label{sec:physical_correct}

For a physical model to be useful it should possess physical correctness. The conditions which guarantee physical correctness of results calculated with the proposed exchange factor formulation are described in the following subsections.

\subsection{Non-Negative Radiation}

For radiative transfer, physical correctness means that the calculated rate of radiation must be non-negative, since negative radiation is not a physically meaningful quantity.

\begin{theorem}[Non-negative Radiation]
\label{thm:physical_correctness}
Under the conditions established in the previous theorems, the exchange factor formulation guarantees physical correctness through non-negative radiation quantities:
\begin{enumerate}
\item Non-negative total radiant power: $\mathbf{j} = \mathbf{M}^{-1}\mathbf{h} \geq \mathbf{0}$ for non-negative source terms $\mathbf{h} \geq \mathbf{0}$
\item Non-negative reflected-scattered power: $\mathbf{r} = \mathbf{R}^\top\mathbf{j} \geq \mathbf{0}$
\item Non-negative emissive power: $\mathbf{e} = (\mathbf{I}-\mathbf{R}^\top)\mathbf{j} = \mathbf{q} + \mathbf{A}^\top\mathbf{j} \geq \mathbf{0}$
\end{enumerate}
\end{theorem}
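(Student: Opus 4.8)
The plan is to prove the three claims in order, since each builds on the previous. The key observation is that non-negativity of $\mathbf{j}$ follows from $\mathbf{M}^{-1} \geq \mathbf{0}$, which in turn follows from showing $\mathbf{M}$ is a (non-singular) M-matrix: an M-matrix with the sign pattern already verified for $\mathbf{C}$ and $\mathbf{D}$ in Theorems~\ref{thm:C_matrix} and~\ref{thm:D_matrix} has a non-negative inverse precisely when it is non-singular, and non-singularity is exactly Theorem~\ref{thm:M_nonsingular}. So first I would recall that every row of $\mathbf{M}$ is either a row of $\mathbf{C}$ or a row of $\mathbf{D}$; by parts (2) and (3) of those two theorems, $\mathbf{M}$ has non-negative diagonal and non-positive off-diagonal entries, so $\mathbf{M} = s\mathbf{I} - \mathbf{N}$ with $\mathbf{N} \geq \mathbf{0}$ for suitable $s$. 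To conclude $\mathbf{M}^{-1}\geq\mathbf{0}$ I need $\rho(\mathbf{N}) \leq s$ with non-singularity, equivalently that $\mathbf{M}$ belongs to the class of non-singular M-matrices.

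The main obstacle is establishing that $\mathbf{M}$ is genuinely an M-matrix rather than merely a $Z$-matrix (correct sign pattern) that happens to be non-singular — that is, ruling out eigenvalues with non-positive real part. The cleanest route is to exhibit a strictly positive vector $\mathbf{x}$ with $\mathbf{M}\mathbf{x} \geq \mathbf{0}$ (and $\mathbf{M}\mathbf{x} \neq \mathbf{0}$ on the appropriate components); for a non-singular $Z$-matrix this is one of the standard equivalent characterizations of being an M-matrix (see \citep{HornJohnson2018}). I would take $\mathbf{x} = \mathbf{1} - \mathbf{b} > \mathbf{0}$. For the $\mathbf{C}$-rows, $[\mathbf{C}(\mathbf{1}-\mathbf{b})]_i = [\mathbf{C}^T(\mathbf{1}-\mathbf{b})]_i$? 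No — I must be careful: Theorem~\ref{thm:C_matrix} gives $\mathbf{C}^T(\mathbf{1}-\mathbf{b}) = \mathbf{0}$, i.e. $(\mathbf{1}-\mathbf{b})$ is a \emph{left} null vector. Instead I would use a \emph{right} positive vector: from $(\mathbf{A}+\mathbf{R})(\mathbf{1}-\mathbf{b}) = \mathbf{1}-\mathbf{b}$ and $\mathbf{A}\mathbf{1} = \mathbf{1}-\mathbf{b}$ one gets, after transposing the relevant identities, that $\mathbf{C}$ and $\mathbf{D}$ acting on the all-ones vector give $[\mathbf{C}\mathbf{1}]_i = 1 - [\mathbf{A}^T\mathbf{1}]_i - [\mathbf{R}^T\mathbf{1}]_i$. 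Since column sums of $\mathbf{A}+\mathbf{R}$ need not equal row sums, the slickest choice is to test with the left null vector directly: $(\mathbf{1}-\mathbf{b})^T\mathbf{C} = \mathbf{0}$ and $(\mathbf{1}-\mathbf{b})^T\mathbf{D} = (\mathbf{1}-\mathbf{b})^T\mathbf{A}^T \geq \mathbf{0}$ componentwise (since $\mathbf{A}\geq\mathbf{0}$), hence $(\mathbf{1}-\mathbf{b})^T\mathbf{M} \geq \mathbf{0}$ with at least one strictly positive component coming from the replaced $\mathbf{D}$-row (this is precisely the computation already carried out in the proof of Theorem~\ref{thm:M_nonsingular}). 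Thus $\mathbf{M}^T$ is a non-singular $Z$-matrix possessing a strictly positive left... again a positive \emph{row} vector $\mathbf{y}^T = (\mathbf{1}-\mathbf{b})^T$ with $\mathbf{y}^T\mathbf{M} \geq \mathbf{0}$, which certifies $\mathbf{M}^T$ (equivalently $\mathbf{M}$) is a non-singular M-matrix, so $\mathbf{M}^{-1}\geq\mathbf{0}$. Combined with $\mathbf{h}\geq\mathbf{0}$ this gives $\mathbf{j} = \mathbf{M}^{-1}\mathbf{h}\geq\mathbf{0}$, proving claim (1).

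For claim (2), I would simply note $\mathbf{R} \geq \mathbf{0}$ by construction (it is a Hadamard product of the non-negative matrices $(\mathbf{1}-\mathbf{B})^T$, $\mathbf{S}_\infty$, and $\mathbf{B}$, where $\mathbf{S}_\infty = (\mathbf{I}-\mathbf{K})^{-1}\mathbf{F} \geq \mathbf{0}$ because the Neumann series $\sum_i \mathbf{K}^i\mathbf{F}$ has only non-negative terms by Theorem~\ref{thm:convergence}), so $\mathbf{R}^T \geq \mathbf{0}$ and hence $\mathbf{r} = \mathbf{R}^T\mathbf{j} \geq \mathbf{0}$. For claim (3), I would use the two equivalent forms of $\mathbf{e}$: first, $\mathbf{e} = \mathbf{q} + \mathbf{A}^T\mathbf{j}$ with $\mathbf{q}\geq\mathbf{0}$ (part of the non-negative source data $\mathbf{h}$) and $\mathbf{A}^T\geq\mathbf{0}$, $\mathbf{j}\geq\mathbf{0}$ from claim (1), giving $\mathbf{e}\geq\mathbf{0}$ directly; the alternative form $\mathbf{e} = (\mathbf{I}-\mathbf{R}^T)\mathbf{j} = \mathbf{D}\mathbf{j}$ then serves as a consistency check, and the identity $\mathbf{D} = \mathbf{C} + \mathbf{A}^T$ together with $\mathbf{C} = \mathbf{I} - \mathbf{A}^T - \mathbf{R}^T$ shows the two expressions agree. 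The one subtlety to flag is bookkeeping on which components of $\mathbf{h}$ are source terms $\mathbf{q}_i$ (for $\mathbf{C}$-rows) versus emissive powers $\mathbf{e}_j$ (for $\mathbf{D}$-rows): I would state the hypothesis as all supplied data being non-negative, so that whichever of $\mathbf{q}$ or $\mathbf{e}$ is prescribed is $\geq\mathbf{0}$, and the recovered quantities in \eqref{eq:emissive_source}--\eqref{eq:r_and_g} are then non-negative by the same argument applied componentwise.
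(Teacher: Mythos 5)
Your proposal is correct and follows the same overall strategy as the paper: show $\mathbf{M}^{-1}\geq\mathbf{0}$ via the non-singular M-matrix property, then deduce parts 2 and 3 from $\mathbf{R}\geq\mathbf{0}$, $\mathbf{A}\geq\mathbf{0}$, $\mathbf{q}\geq\mathbf{0}$ and part 1. The difference lies in how the M-matrix property of the \emph{mixed} matrix $\mathbf{M}$ is justified. The paper simply asserts that Theorems \ref{thm:C_matrix}, \ref{thm:D_matrix} and \ref{thm:M_nonsingular} make $\mathbf{M}$ a non-singular M-matrix; strictly speaking, those theorems only supply the Z-sign pattern of the assembled rows together with non-singularity, and a non-singular Z-matrix does not automatically have a non-negative inverse. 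Your additional step closes precisely this gap, and it does so by recycling the computation already present in the proof of Theorem \ref{thm:M_nonsingular}: testing with the positive left vector $(\mathbf{1}-\mathbf{b})^T$, the $\mathbf{C}$-rows contribute zero and the $\mathbf{D}$-rows contribute $(1-\mathbf{b}_j)(\mathbf{A}^T)_{j,:}\geq\mathbf{0}$, so $(\mathbf{1}-\mathbf{b})^T\mathbf{M}\geq\mathbf{0}$ with $(\mathbf{1}-\mathbf{b})>\mathbf{0}$. One point to state explicitly: the textbook characterization of non-singular M-matrices among Z-matrices is \emph{strict} semipositivity ($\exists\,\mathbf{x}>\mathbf{0}$ with $\mathbf{M}\mathbf{x}>\mathbf{0}$), whereas you only obtain a weak inequality; the upgrade works because writing $\mathbf{M}^T=s\mathbf{I}-\mathbf{N}$ with $\mathbf{N}\geq\mathbf{0}$, the inequality $\mathbf{N}(\mathbf{1}-\mathbf{b})\leq s(\mathbf{1}-\mathbf{b})$ with a positive vector gives $\rho(\mathbf{N})\leq s$ by Collatz--Wielandt, and $\rho(\mathbf{N})=s$ would make $0$ an eigenvalue of $\mathbf{M}^T$, contradicting Theorem \ref{thm:M_nonsingular}; hence $\rho(\mathbf{N})<s$ and $\mathbf{M}^{-1}\geq\mathbf{0}$. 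With that one sentence added, your argument is a rigorous (and arguably tighter) version of the paper's proof; parts 2 and 3 coincide with the paper's treatment, and your remark on the bookkeeping of which entries of $\mathbf{h}$ are prescribed sources versus emissive powers matches the paper's implicit assumption of non-negative prescribed data.
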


\begin{proof}
Part 1: Non-negative total radiant power.
From Theorems \ref{thm:C_matrix}, \ref{thm:D_matrix} and \ref{thm:M_nonsingular}, the mixed boundary matrix $\mathbf{M}$ is a non-singular M-matrix. By the fundamental property of M-matrices, $\mathbf{M}^{-1} \geq \mathbf{0}$. Therefore, for non-negative source terms $\mathbf{h} \geq \mathbf{0}$:
\begin{equation}
\mathbf{j} = \mathbf{M}^{-1}\mathbf{h} \geq \mathbf{0}
\end{equation}

Part 2: Non-negative reflected-scattered power.
Since $\mathbf{R} \geq \mathbf{0}$ by construction and $\mathbf{j} \geq \mathbf{0}$ from Part 1:
\begin{equation}
\mathbf{r} = \mathbf{R}^\top\mathbf{j} \geq \mathbf{0}
\end{equation}

Part 3: Non-negative emissive power.
The emissive power can be expressed in two equivalent forms due to internal consistency of the proposed exchange factor formulation:
\begin{equation}
\mathbf{e} = (\mathbf{I}-\mathbf{R}^\top)\mathbf{j} = \mathbf{q} + \mathbf{A}^\top\mathbf{j}
\label{eq:emissive_equivalence}
\end{equation}

Since $\mathbf{A} \geq \mathbf{0}$ by construction, $\mathbf{j} \geq \mathbf{0}$ from Part 1, and assuming non-negative prescribed  source terms $\mathbf{q} \geq \mathbf{0}$:
\begin{equation}
\mathbf{e} = \mathbf{q} + \mathbf{A}^\top\mathbf{j} \geq \mathbf{0}
\end{equation}

The equivalence in equation~\eqref{eq:emissive_equivalence} reveals that the entries of $\mathbf{j}$ perfectly balance the entries of $\mathbf{I}-\mathbf{R}^\top$, ensuring physical consistency.
\end{proof}

\subsection{Conservation of Energy}

At the steady state, energy conservation requires that the source fluxes $\mathbf{q}=\mathbf{C}\mathbf{j}$ calculated from any total radiant power solution vector $\mathbf{j}$ sum to a scalar of zero: $\mathbf{1}^\top\mathbf{C}\mathbf{j} = 0$. This fundamental physical principle holds unconditionally for the proposed exchange factor formulation.

\begin{theorem}[Unconditional Energy Conservation]\label{thm:unconditional_conservation}
Let $\mathbf{F}$ be the row-stochastic exchange factor matrix, and let $\mathbf{b}$ be the vector of coefficients for the column constant single reflection-scattering matrix $\mathbf{B}$, and define the remaining system matrices as given by the proposed exchange factor formulation.

Then for any mixed system $\mathbf{M}\mathbf{j} = \mathbf{h}$ where $\mathbf{M}$ is assembled from rows of $\mathbf{C}$ and $\mathbf{D}$, the sum of the source fluxes equal zero, meaning the energy conservation property holds:
\begin{equation}
\mathbf{1}^\top \mathbf{C} \mathbf{j} = 0
\end{equation}
regardless of the choice of rows or non-negative boundary conditions in the mixed system.
\end{theorem}

\begin{proof}[Proof]
From Theorem \ref{thm:C_matrix}, $\mathbf{1}^\top$ is a left eigenvector of $\mathbf{C}$ with eigenvalue $0$. Therefore:
\begin{equation}
\mathbf{1}^\top \mathbf{C}\mathbf{j} = \mathbf{0}^\top\mathbf{j} = 0
\end{equation}

Therefore: $\mathbf{1}^\top \mathbf{C} \mathbf{j} = \mathbf{0}^\top \mathbf{j} = 0$ for any vector $\mathbf{j}$. This establishes unconditional energy conservation automatically, regardless of how the mixed system is constructed.
\end{proof}

\section{Method Validation}
\label{sec:validation}

\subsection{Symbolic Validation}
\label{sec:symbolic_validation}

A common diagnostic for any new radiative transfer formulation is whether it reproduces the classical closed-form solutions for canonical enclosures in the limit of pure surface-to-surface exchange with reflecting walls. This subsection presents a symbolic validation of the proposed formulation against the textbook formulas for two such enclosures: infinite parallel plates of equal area, and infinite concentric cylinders. The derivations are carried out symbolically in the computer algebra system SymPy \citep{SymPy2017}, with all algebraic manipulations preserved as rational expressions in the reflectivity $\rho$ and (for cylinders) the area ratio $\alpha = A_1/A_2$. The supplementary script reproducing these derivations is available in \ref{app:symbolic_scripts}.

In both cases the medium between the surfaces is transparent, so only the surface-surface block of $\mathbf{F}$ is non-trivial, and the matrices reduce to $2 \times 2$. The reflectivity is uniform on both surfaces ($\rho_1 = \rho_2 = \rho$) and emissivity is $\varepsilon = 1 - \rho$. The blackbody emissive powers are $E_{\mathrm{b},i} = \sigma T_i^4$, and the emitted power of element $i$ is $\mathbf{e}_i = \varepsilon_i E_{\mathrm{b},i} A_i$ in Watts. The textbook reference formulas are taken from \citep{Daun2021}.

\subsubsection{Infinite Parallel Plates}
\label{sec:symbolic_plates}

For infinite parallel plates of equal area $A$, the view factors are $\mathbf{F}_{11} = \mathbf{F}_{22} = 0$ and $\mathbf{F}_{12} = \mathbf{F}_{21} = 1$. The exchange factor matrix and column-constant reflection-scattering matrix are
\begin{equation}
\mathbf{F} = \begin{bmatrix} 0 & 1 \\ 1 & 0 \end{bmatrix},
\quad
\mathbf{B} = \begin{bmatrix} \rho & \rho \\ \rho & \rho \end{bmatrix}.
\end{equation}
Applying the corrected definitions $\mathbf{A} = \mathbf{F}\circ(\mathbf{1}-\mathbf{B})$ and $\mathbf{R} = \mathbf{F}\circ\mathbf{B}$:
\begin{equation}
\mathbf{A} = \begin{bmatrix} 0 & 1-\rho \\ 1-\rho & 0 \end{bmatrix},
\quad
\mathbf{R} = \begin{bmatrix} 0 & \rho \\ \rho & 0 \end{bmatrix}.
\end{equation}
The matrices $\mathbf{C}$ and $\mathbf{D}$ are
\begin{equation}
\mathbf{C} = \begin{bmatrix} 1 & -1 \\ -1 & 1 \end{bmatrix},
\quad
\mathbf{D} = \begin{bmatrix} 1 & -\rho \\ -\rho & 1 \end{bmatrix}.
\end{equation}
Note that $\mathbf{C}$ depends only on the geometry; the optical properties enter only through $\mathbf{D}$.

With both temperatures prescribed, the mixed-boundary matrix $\mathbf{M}$ is assembled entirely from rows of $\mathbf{D}$, and the right-hand side is $\mathbf{h} = [\mathbf{e}_1, \mathbf{e}_2]^\top = [\varepsilon E_{\mathrm{b},1} A, \varepsilon E_{\mathrm{b},2} A]^\top$. Solving $\mathbf{D}\mathbf{j} = \mathbf{h}$ yields
\begin{equation}
\mathbf{j} = \frac{\varepsilon A}{1-\rho^2} \begin{bmatrix} E_{\mathrm{b},1} + \rho E_{\mathrm{b},2} \\ E_{\mathrm{b},2} + \rho E_{\mathrm{b},1} \end{bmatrix},
\end{equation}
which is the classical radiosity result for parallel plates in power form. The absorbed power vector is $\mathbf{g}_\mathrm{a} = \mathbf{A}^\top \mathbf{j}$, and the source term on plate 1 is $\mathbf{q}_1 = \mathbf{e}_1 - \mathbf{g}_\mathrm{a,1}$. After simplification:
\begin{equation}
\mathbf{q}_1 = \frac{\sigma(T_1^4 - T_2^4) A}{1/\varepsilon + 1/\varepsilon - 1},
\label{eq:plates_textbook}
\end{equation}
which is the textbook formula for net radiative heat transfer between two infinite gray-diffuse parallel plates \citep{Daun2021}. Energy conservation $q_1 + q_2 = 0$ holds identically, in agreement with Theorem~\ref{thm:unconditional_conservation}.

The agreement is symbolic, meaning equation~\eqref{eq:plates_textbook} is recovered as an algebraic identity in $\rho$, $\varepsilon$, $\sigma$, $T_1$, $T_2$, and $A$, not as a numerical match within some tolerance.

\subsubsection{Infinite Concentric Cylinders}
\label{sec:symbolic_cylinders}

For infinite concentric cylinders (the analysis applies equally to concentric spheres) with area ratio $\alpha = A_1/A_2 < 1$, the view factors are $\mathbf{F}_{11} = 0$, $\mathbf{F}_{12} = 1$, $\mathbf{F}_{21} = \alpha$, $\mathbf{F}_{22} = 1-\alpha$. The exchange factor matrix is
\begin{equation}
\mathbf{F} = \begin{bmatrix} 0 & 1 \\ \alpha & 1-\alpha \end{bmatrix},
\end{equation}
with $\mathbf{B}$ as in section \ref{sec:symbolic_plates}. The system matrices are
\begin{equation}
\mathbf{A} = \begin{bmatrix} 0 & 1-\rho \\ \alpha(1-\rho) & (1-\alpha)(1-\rho) \end{bmatrix},
\quad
\mathbf{R} = \begin{bmatrix} 0 & \rho \\ \alpha\rho & (1-\alpha)\rho \end{bmatrix},
\end{equation}
\begin{equation}
\mathbf{C} = \begin{bmatrix} 1 & -\alpha \\ -1 & \alpha \end{bmatrix},
\quad
\mathbf{D} = \begin{bmatrix} 1 & -\alpha\rho \\ -\rho & 1-(1-\alpha)\rho \end{bmatrix}.
\end{equation}
Again, $\mathbf{C}$ is purely geometric. With both temperatures prescribed and $\mathbf{M} = \mathbf{D}$, solving $\mathbf{D}\mathbf{j} = \mathbf{h}$ and computing $\mathbf{q}_1 = \mathbf{e}_1 - (\mathbf{A}^\top\mathbf{j})_1$ yields
\begin{equation}
\mathbf{q}_1 = \frac{\sigma(T_1^4 - T_2^4) A_1}{1/\varepsilon_1 + (A_1/A_2)(1/\varepsilon_2 - 1)},
\label{eq:cylinders_textbook}
\end{equation}
which is the textbook formula for net radiative heat transfer between two infinite gray-diffuse concentric cylinders \citep{Daun2021}. Again, the agreement is symbolic.

\subsubsection{Total radiant power: albedo invariance}
\label{sec:radiant_power_albedo}

A further symbolic test concerns the invariance of the total radiant power field under variations of the scattering albedo of an absorbing-emitting medium in radiative equilibrium with constant extinction. The invariance is established as an exact algebraic identity.

To simplify this derivation, a general constrained geometry is used, with a single surface element and a single enclosed volume element. This geometry satisfies uniform exchange factors $\mathbf{F}_{ss} = \mathbf{F}_{sg} = \mathbf{F}_{gs} = \mathbf{F}_{gg} = 1/2$. Then reciprocity requires $A=4\beta V$. These requirements might leave additional degrees of freedom, but these are irrelevant for the following derivation. The surface has an emissivity of unity ($\rho=0$), and the volume has a single scattering albedo of $\omega$.

For the present $2\times 2$ system ($m+n=2$) with uniform exchange factors, surface reflectivity $\rho=0$, and volume scattering albedo $\omega$, the column-constant single reflection-scattering matrix is
\begin{equation}
\mathbf{B} = \begin{bmatrix} 0 & \omega \\ 0 & \omega \end{bmatrix}.
\end{equation}
Applying the proposed formulation $\mathbf{A} = \mathbf{F} \circ (\mathbf{1}-\mathbf{B})$ and $\mathbf{R} = \mathbf{F} \circ \mathbf{B}$ from equations \eqref{eq:A} and \eqref{eq:R}, the absorption and reflection-scattering matrices are
\begin{equation}
\mathbf{A} = \begin{bmatrix} 1/2 & (1-\omega)/2 \\ 1/2 & (1-\omega)/2 \end{bmatrix},
\quad
\mathbf{R} = \begin{bmatrix} 0 & \omega/2 \\ 0 & \omega/2 \end{bmatrix}.
\end{equation}
From equation \eqref{eq:C_and_D}, the matrices $\mathbf{C}$ and $\mathbf{D}$ are
\begin{equation}
\mathbf{C} = \begin{bmatrix} 1/2 & -1/2 \\ -1/2 & 1/2 \end{bmatrix},
\quad
\mathbf{D} = \begin{bmatrix} 1 & 0 \\ -\omega/2 & 1-\omega/2 \end{bmatrix}.
\end{equation}
Notably, $\mathbf{C}$ depends only on the geometry: the optical properties enter only through $\mathbf{D}$. The radiative equilibrium condition for the volume element ($\mathbf{q}_2 = 0$) is imposed by taking the second row of $\mathbf{C}$, and the surface emissive power $\mathbf{h}_1$ is prescribed by taking the first row of $\mathbf{D}$, yielding
\begin{equation}
\mathbf{M} = \begin{bmatrix} 1 & 0 \\ -1/2 & 1/2 \end{bmatrix},
\quad
\mathbf{h} = \begin{bmatrix} h_1 \\ 0 \end{bmatrix}.
\end{equation}
Solving $\mathbf{M}\mathbf{j} = \mathbf{h}$ gives
\begin{equation}
\mathbf{j} = \begin{bmatrix} h_1 \\ h_1 \end{bmatrix},
\end{equation}
confirming that the total radiation field is invariant under variations in $\omega$. The element-wise energy balance is
\begin{equation}
\mathbf{C}\mathbf{j} = \begin{bmatrix} 0 \\ 0 \end{bmatrix},
\end{equation}
confirming that energy conservation is satisfied not only globally ($\mathbf{1}^\top\mathbf{C}\mathbf{j} = 0$, which holds unconditionally by Theorem~\ref{thm:unconditional_conservation}) but also element-wise in this radiative-equilibrium configuration. The emissive power vector recovered from $\mathbf{e} = \mathbf{D}\mathbf{j}$ is
\begin{equation}
\mathbf{e} = \begin{bmatrix} h_1 \\ h_1(1-\omega) \end{bmatrix},
\end{equation}
and the reflected-scattered power vector $\mathbf{r} = \mathbf{R}^\top \mathbf{j}$ is
\begin{equation}
\mathbf{r} = \begin{bmatrix} 0 \\ h_1\omega \end{bmatrix},
\end{equation}
which confirms that $\omega$ shifts the balance between the scattered power and the emissive power of the volume element linearly, as expected.

\subsubsection{Comparison to Noble's Formulation of Hottel's Zonal Method}
\label{sec:hottel_symbolic}

The proposed formulation is compared to Noble's matrix formulation \citep{Noble1975} of Hottel's method, which includes the option of solving multiple reflection-scattering problems. This section will show that there exists a subtle discrepancy in Noble's matrix formulation of Hottel's method. First it will be shown symbolically and next, in section \ref{sec:hottel_numerical}, a larger system is solved with the same boundary conditions to validate this conclusion numerically. This discrepancy is not present in the proposed formulation.

Noble's formulation of Hottel's method is applied. Noble's matrices (scalars) here should not be confused with the proposed matrices of this paper, even though some share the same symbols. The $2\times 2$ exchange factor matrix of section \ref{sec:radiant_power_albedo} is translated to exchange areas by multiplying each row with its corresponding effective area:
\begin{equation}
	\mathbf{F}_\mathrm{A} = \mathbf{E}\mathbf{F} =
	\begin{bmatrix}
	\overline{\mathbf{ss}} &\overline{\mathbf{sg}}\\
	\overline{\mathbf{gs}} &\overline{\mathbf{gg}}
	\end{bmatrix} = 
	\begin{bmatrix}
	A/2 &A/2\\
	4\beta V/2 &4\beta V/2
	\end{bmatrix}
\end{equation}
This matrix must be symmetric to satisfy reciprocity, which confirms that $A=4\beta V$. Using $\omega$ and $\rho=0$ to calculate the intermediate matrices (scalars):
\begin{equation}
	\mathbf{P} = [4\beta V \mathbf{I}-\omega \overline{\mathbf{gg}}]^{-1} = \frac{1}{4\beta V - \omega 4\beta V/2 }
\end{equation}
\begin{equation}
	\mathbf{L} = \overline{\mathbf{sg}}\;\mathbf{P}\; \overline{\mathbf{gs}} = A/2\frac{1}{4\beta V - \omega 4\beta V/2 }4\beta V/2 = \frac{1}{2}\frac{A}{2 - \omega}
\end{equation}
\begin{equation}
	\mathbf{R} = [A\mathbf{I} - (\overline{\mathbf{ss}}+\omega \mathbf{L})\rho\mathbf{I}]^{-1} =
\frac{1}{A}
\end{equation}
\begin{equation}
	\mathbf{K} = 4 \beta\; \overline{\mathbf{sg}}\; \mathbf{P}\; V\mathbf{I} = 4\beta A/2 \frac{1}{4\beta V - \omega 4\beta V/2 } V = 
	\frac{A}{2 - \omega } = 2 \mathbf{L}
\end{equation}
Next, calculating the total exchange areas, according to Noble:
\begin{equation}
	\overline{\mathbf{SS}} = \varepsilon A\mathbf{I}\;\mathbf{R} (\overline{\mathbf{ss}}+\omega\mathbf{L})\varepsilon \mathbf{I} = A/2+\frac{\omega}{2}\frac{A}{2 - \omega}
\end{equation}
\begin{equation}
	\overline{\mathbf{SG}} = \overline{\mathbf{GS}}^\top =
	(1-\omega)\varepsilon A\mathbf{I}\;\mathbf{R}\;\mathbf{K} = (1-\omega)A/(2-\omega)
\end{equation}
\begin{equation}
\begin{split}
	\overline{\mathbf{GG}} &= (1-\omega)^2 4\beta V \mathbf{I}\;\mathbf{P}\; \overline{\mathbf{gg}} + (1-\omega)^2 \mathbf{K}^\top \rho\mathbf{I}\; \mathbf{R}\;\mathbf{K}\\ &= (1-\omega)^2 4\beta V \frac{1}{4\beta V - \omega 4\beta V/2 } 4\beta V/2 \\
	&= \frac{(1-\omega)^2 4\beta V}{2 - \omega }
\end{split}
\end{equation}
Next, the emissive power solution $\mathbf{e}$ of section \ref{sec:radiant_power_albedo}, calculated using the proposed exchange factor formulation is used as an input to Hottel's method, to check whether or not the two methods agree on energy conservation. This is done by translating $\mathbf{e}$ to specific form, as required by Noble's formulation of Hottel's method, by dividing with the equivalent area of each element:
\begin{equation}
	\mathbf{e}_\mathrm{A} = 
	\begin{bmatrix}
	h_1/A \\
    h_1(1-\omega)/(4\beta V)            
	\end{bmatrix}
\end{equation}
Next, according to Noble, the net source terms of the surface and volume elements are given by:
\begin{equation}
\begin{split}
	\mathbf{Q} &= \varepsilon A\; \mathbf{E}-\overline{\mathbf{SS}}\;\mathbf{E} -\overline{\mathbf{SG}}\;\mathbf{E}_\mathrm{g} \\
	\mathbf{S} &= \overline{\mathbf{GG}}\; \mathbf{E}_\mathrm{g} +\overline{\mathbf{GS}}\; \mathbf{E} -(1-\omega)4\beta V \mathbf{E}_\mathrm{g}
\end{split}
\end{equation}
inserting the components of the $\mathbf{e}_\mathrm{A}$-vector and simplifying leads to:
\begin{equation}
\begin{split}
	\mathbf{Q} &= h_1-\overline{\mathbf{SS}}h_1/A -\overline{\mathbf{SG}}h_1(1-\omega)/(4\beta V) \\
	\mathbf{S} &= \overline{\mathbf{GG}} h_1(1-\omega)/(4\beta V) +\overline{\mathbf{GS}} h_1/A -(1-\omega)^2 h_1
\end{split}
\end{equation}
Now, inserting the total exchange areas and simplifying (using $A/(4\beta V)=1$) yields the following expressions for the normalized net flux of each element:
\begin{equation}
\begin{split}
	\mathbf{Q}/h_1 &= \frac{\omega(1-\omega)}{2-\omega} \\
	\mathbf{S}/h_1 &= \frac{\omega(1-\omega)}{2 - \omega}
\end{split}
	\label{eq:Hottel_discrepancy}
\end{equation}
The expressions of equation \eqref{eq:Hottel_discrepancy} are zero only at the extremes of $\omega=0$ and $\omega=1$, even though radiative equilibrium was assumed. Since these functions represent net emission and absorption, they always balance due to opposite signs, which ensures overall energy conservation, but simultaneously hides the subtle element-wise discrepancy.

\subsection{Numerical Validation}
\label{sec:numerical_validation}

For numerical validation, a custom two-dimensional ray tracing code, written by the author with support from artificial intelligence \citep{Claude2024}, in the Julia programming language \citep{Bezanson2017}, was used. This code ray traces in parallel using the central processing unit (CPU), allowing for higher numerical precision than typical graphical processing unit (GPU) ray tracing, and greater data structure flexibility. For this analysis, the code assumes a uniform extinction coefficient $\beta$, and ray traces on a coarse mesh, then maps the point of first interaction to a finer mesh, using a grid mapping to find the absorber indices, which allows for efficient ray tracing on arbitrarily fine meshes on the CPU. The code features a custom two-dimensional meshing algorithm which can handle a combination of user-specified skewed quadrilaterals and triangles. Importantly, the rays are traced only until the point of first interaction, as dictated by the proposed formulation. This CPU implementation offers generality, data handling flexibility, code simplicity and code maintainability.

This code also features a Julia adaptation, by permission from the authors, of the MATLAB code by Jacob A. Kerkhoff and Michael J. Wagner, published in their GitHub repository \citep{Wagner2021}, during their work on a paper on solar cavity receivers \citep{Kerkhoff2021}, which implements the three-dimensional analytical view factor method of Narayanaswamy \citep{Narayanaswamy2015}.

The Julia CPU code written for validation is available as a registered package for radiative transfer calculations in the Julia programming language \citep{Bielefeld2024}. The package is under active development, with documentation and methodology details being continually rolled out at \url{https://gert.net}.

The hardware used for this validation is a 64-bit PC with an Intel\textsuperscript{\tiny\textregistered} Core$^\mathrm{TM}$ i7-14700KF, 3400 MHz, 20 Cores, 28 logical processors CPU and 64 GB of RAM.

\subsubsection{Comparison To The Diffusion Approximation}

To compare the proposed formulation to the diffusion approximation, it was applied to a high extinction ($\beta=100$) non-reflecting non-scattering two-dimensional rectangle of dimensions one thousand meters by one meter. These dimensions were chosen to approximate a one-dimensional problem. Such a one-dimensional problem can be solved using the diffusion approximation for radiative transfer between infinite parallel plates, as described by Howell et al. \citep{Daun2021}:
\begin{equation}
\begin{split}
	q_\mathrm{z} &= \frac{E_\mathrm{bw1}-E_\mathrm{bw2}}{3\beta D/4+1/\varepsilon_\mathrm{w1}+1/\varepsilon_\mathrm{w2}-1}\\
	E_\mathrm{b1} &= E_\mathrm{bw1} + q_\mathrm{z}(1/2 - 1/\varepsilon_\mathrm{w1})\\
	E_\mathrm{b}(z) &= E_\mathrm{b1} - (3\beta z/4) q_\mathrm{z}
\end{split}
\label{eq:diffusion_approx}
\end{equation}
where $E_\mathrm{bw1}$ and $E_\mathrm{bw2}$ are the area-specific emissive powers of the plates, $D$ is the separation distance and $q_\mathrm{z}$ is the area-specific energy transfer rate, $E_\mathrm{b1}$ is the area-specific emissive power of the medium at the surface of the first wall and $E_\mathrm{b}(z)$ is the linearly changing area-specific emissive power of the medium, as a function of position. For the present analysis $E_\mathrm{b}(z)$ was normalized by $E_\mathrm{bw1}$, and $E_\mathrm{bw2}$ was set to zero, with $\varepsilon_\mathrm{w1}=\varepsilon_\mathrm{w2}=1$, which simplifies equations \eqref{eq:diffusion_approx} to:
\begin{equation}
	\frac{E_\mathrm{b}(z)}{E_\mathrm{bw1}} = 1 - \frac{3\beta z/4 + 1/2}{3\beta D/4+1}
	\label{eq:diffusion_simple}
\end{equation}
Equation \eqref{eq:diffusion_simple} produces a straight line, and for high extinction, the $\beta$ terms become dominant, and the endpoints are approximately the points (0,1) and (1,0).

Figure \ref{fig1} shows the result of application of the proposed formulation, utilizing that $\mathbf{A}=\mathbf{F}$ while $\mathbf{R}=\mathbf{0}$. To compare with equation \eqref{eq:diffusion_simple}, the following ratio was calculated along the short centre line:
\begin{equation}
	\left(\frac{E_\mathrm{b}}{E_\mathrm{bw1}}\right)_i = \frac{\mathbf{e}_{\mathrm{g},i}/(4\beta V)}{\mathbf{e}_\mathrm{w1}/A_\mathrm{w1}}
	\label{eq:diffusion_compare}
\end{equation}
Figure \ref{fig1} shows equation \eqref{eq:diffusion_simple} compared to the results of the proposed formulation, when calculated from equation \eqref{eq:diffusion_compare}, along the centre line of the short dimension, with an $\mathbf{F}$ produced from sampling $10^9$ rays uniformly in a $3\times 51$ domain of dimensions 1000 m by 1 m.

\begin{figure}
\centering
\includegraphics[width=0.9\linewidth]{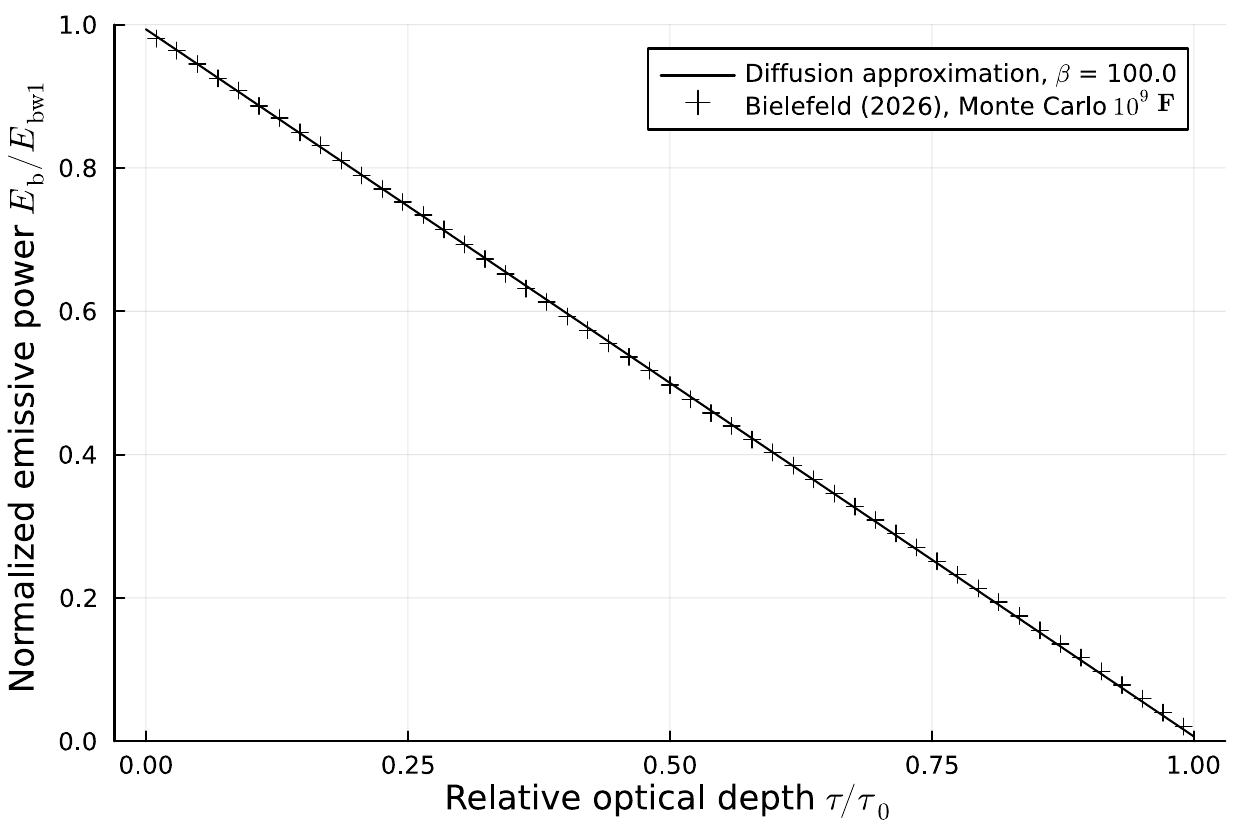}
\caption{Normalized emissive power along the short centreline in a 1000 m by 1 m geometry (approximately one-dimensional), calculated from the proposed formulation, compared to the diffusion approximation between infinite parallel plates \citep{Daun2021}. The medium is absorbing-emitting non-scattering of extinction $\beta=100$ and the domain has one hot wall, and the remaning non-reflecting non-emitting, all $\varepsilon_w=1$, with the volume divided into $3\times 51$ elements, and an $\mathbf{F}$ from $10^9$ ray samples.}\label{fig1}
\end{figure}

\subsubsection{Comparison to Crosbie and Schrenker}
\label{sec:Crosbie_Schrenker}

The two-dimensional results of Crosbie and Schrenker \citep{Crosbie1984} are used for validation. Figure \ref{fig2} was produced by modifying the model of this paper by setting gas thermal emissions to zero, since these were neglected by Crosbie and Schrenker, which creates a non-uniform negative source flux in the domain for the absorbing-scattering case. This was achieved by zeroing the gas-emitter rows of $\mathbf{A}$ (lower blocks) from equation \eqref{eq:A}, while leaving $\mathbf{R}$ from equation \eqref{eq:R} unchanged so that the gas still scatters incident radiation. Table \ref{tab:table1} provides an overview of the parameters of the cases used for comparison.

\begin{table} [h]
\centering
\caption{The cases from Crosbie and Schrenker \citep{Crosbie1984} chosen for validation. The parameters describe a square geometry with diffuse radiation incident from one wall. The square has height TAUZ0 and half-width TAUY0, with single scattering albedo of the medium of ALBEDO. NY and NZ refer to the number of quadrature points used in the solution and the emissivity of unity indicates non-reflecting walls.}
\label{tab:table1}
\begin{tabular}{lll}
	  & Scattering & Absorbing-scattering \\ \hline
TAUY0 & 0.500 & 0.500  \\
TAUZ0 & 1.000 & 1.000  \\
ALBEDO & 1.00 & 0.50   \\
NY & 25 & 25 \\
NZ & 25 & 25 \\
$\varepsilon_w$ & 1.00 & 1.00 \\ \hline
\end{tabular}
\end{table}

\begin{figure}
\centering
\includegraphics[width=0.9\linewidth]{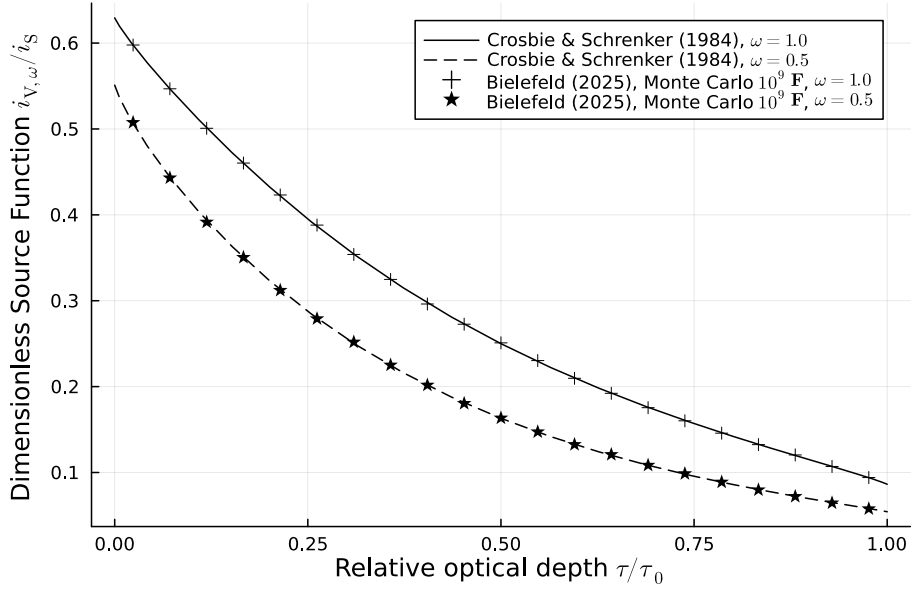}
\caption{Non-dimensional source function $i_{\mathrm{V},i}/i_\mathrm{S}$ for the centre line perpendicular to the incident Lambertian intensity onto a two-dimensional square geometry as a function of relative optical depth, based on an estimated $\mathbf{F}$ from ray tracing $10^{9}$ rays in total with no smoothing applied, in a $21\times 21$ geometry, compared to the solutions of Crosbie and Schrenker \citep{Crosbie1984}.}\label{fig2}
\end{figure}

\subsubsection{Numerical Confirmation of the Discrepancy in Hottel's Method}
\label{sec:hottel_numerical}

This section numerically validates the symbolic findings of section \ref{sec:hottel_symbolic}. Figure \ref{fig3} shows a comparison of the predicted source fluxes in the entire geometry of a $21\times 21$ unit square with constant extinction of $\beta=1$ and a gas which is in radiative equilibrium with varying single scattering albedo $\omega$. While this geometry differs from the $2\times 2$ analytical case, the boundary condition structure is equivalent: gas in radiative equilibrium (zero source flux) enclosed by surfaces with prescribed emissive power. For this specific case, since the gas is in radiative equilibrium and it is fully enclosed by the surface and the system is in steady state, the net source flux should be zero for both the total surface area and the total gas volume, individually, as was observed for the proposed formulation in the analytical case. Therefore, the total net energy balance of the system should be zero, regardless of the choice of sign for the total surface element and the total volume element. This is the expected behaviour, and furthermore, this behaviour should be independent of the single scattering albedo, due to radiative equilibrium. However, as shown in figure \ref{fig3}, this is not the case for Noble's  formulation of Hottel's method, which prescribes signs to this sum to allow cancellation, as predicted analytically. The source flux of the proposed formulation is independent of single scattering albedo, within numerical precision, as expected. Hottel's method only matches the desired behaviour for $\omega=0$ and $\omega=1$, and between these values the solution deviates from the analytical accuracy of the proposed formulation. In figure \ref{fig3}, the top crosses with positive source fluxes are the surface elements with incident flux, while the bottom crosses with negative source fluxes are the remaining absorbing surfaces, and the crosses at zero are the gas elements which are in radiative equilibrium. Figure \ref{fig3} was produced by first solving the mixed boundary problem with incident Lambertian flux onto the unit square geometry in radiative equilibrium using the proposed formulation. Then the source flux distribution of this solution was compared to the source flux distribution of Hottel's method, by using the emissive power distribution from the solution generated with the proposed formulation as an input to Hottel's method. This approach is the same approach which was used in the symbolic derivation.

The proposed formulation conserves energy to machine precision for all cases, meaning the sum of all source fluxes in the domain is zero, which is only the case for Hottel's method at the extreme cases of $\omega=0$ and $\omega=1$, when the sign convention is disregarded. It was confirmed that, when using the prescribed sign convention, the overall energy balance of Noble's formulation of Hottel's method is indeed approximately equal to zero. To produce the matrix of exchange factors or exchange areas $10^9$ rays were traced in total. The shape of the discrepancies of figure \ref{fig3} perfectly matches the shape predicted by the analytical expressions of equation \eqref{eq:Hottel_discrepancy} which validates the derivation.

\begin{figure}
\centering
\includegraphics[width=0.9\linewidth]{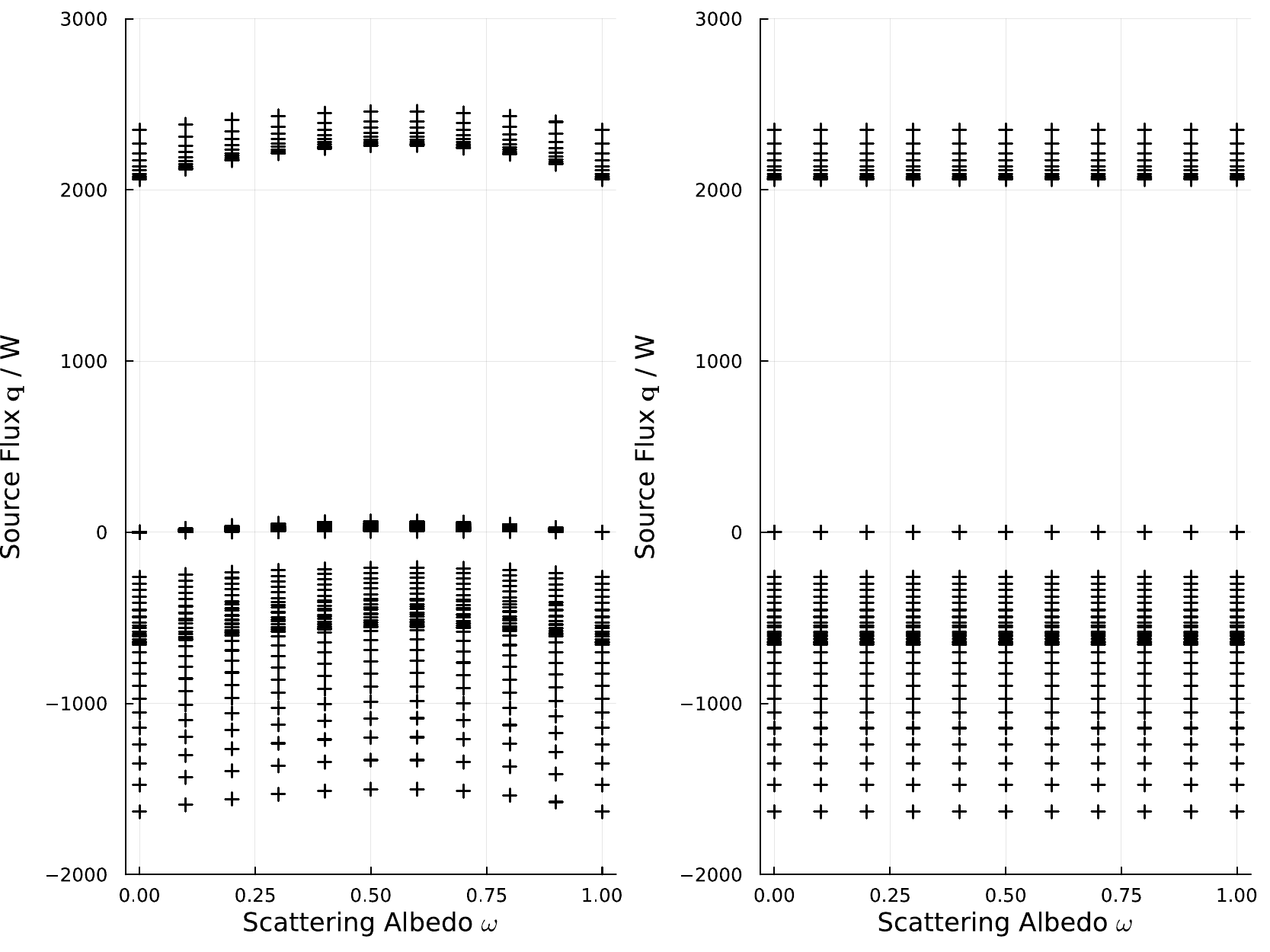}
\caption{Comparison of Noble's matrix formulation \citep{Noble1975} of Hottel's Zonal Method (left) to the proposed exchange factor formulation (right), showing the magnitude of predicted source fluxes of each element in a $21\times 21$ unit square enclosure of a medium in radiative equilibrium with constant extinction $\beta=1$ and varying albedo $\omega$.}\label{fig3}
\end{figure}

\subsubsection{Arbitrary Accuracy}

To quantify the accuracy of the proposed formulation, and to show that its accuracy is limited solely by the accuracy of the input exchange factor matrix $\mathbf{F}$, meaning arbitrary precision can be achieved, the root mean square (RMS) error of the total radiant power was calculated for a total number of ray samples of $10^4$, $10^5$, $10^6$, $10^7$ and $10^8$ in a non-reflecting non-scattering $21\times 21$ unit square enclosure in radiative equilibrium with incident Lambertian intensity from a surface at 1000 Kelvin. For each of these cases the total radiant power was calculated using the corresponding $\mathbf{F}$, and the RMS error was calculated:
\begin{equation}
	\varepsilon_\mathrm{RMS} = \left[\frac{1}{m+n} \sum_{i=1}^{m+n} (\mathbf{j}_i - \mathbf{j}_i^\mathrm{exact})^2 \right]^{1/2}
\end{equation}
where $\mathbf{j}^\mathrm{exact}$ was calculated using an exchange factor matrix obtained using $10^{10}$ ray samples. Figure \ref{fig4} shows a double-logarithmic plot of the results. The reason for using a $10^{10}$ ray result as the reference solution instead of the results of Crosbie and Schrenker is to avoid the errors associated with interpolation which would be necessary due to their use of a non-uniform grid, and also to capture the influence of the full domain instead of just part of it.

\begin{figure}
\centering
\includegraphics[width=0.9\linewidth]{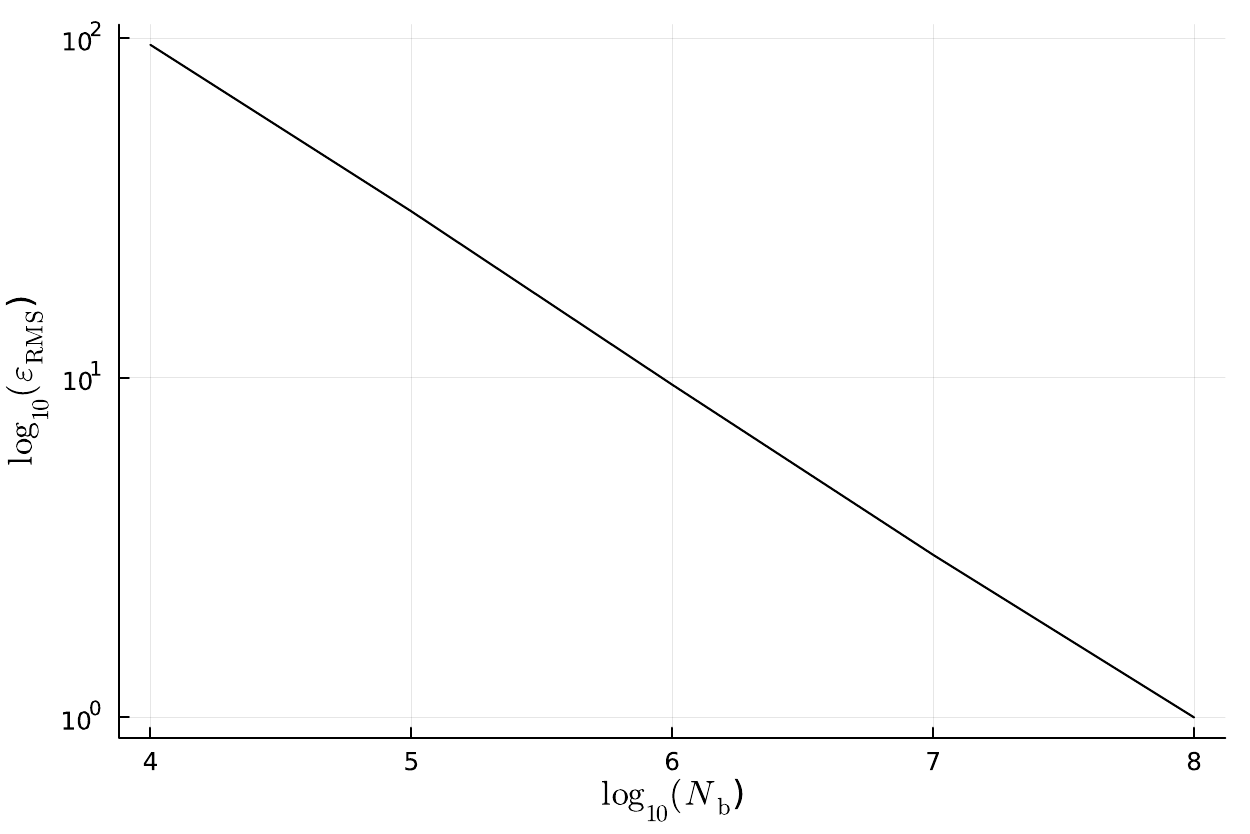}
\caption{Root mean square error of the total radiant power in a unit square $21\times 21$ enclosure with incident Lambertian intensity onto one side, from an emissive power corresponding to 1000 K, comparing cases with total number of samples of $10^4$, $10^5$, $10^6$, $10^7$, $10^8$ to a reference solution obtained using $10^{10}$ samples.}\label{fig4}
\end{figure}

\subsubsection{Uncertainty Propagation}

To quantify how the uncertainty in the exchange factors propagate through the proposed formulation, the following approach was used: First the uncertainty of each entry of the exchange factor matrix was calculated. Since ray tracing is essentially a counting process which follows a Poisson distribution, for a sufficiently large number of trials the distribution of the trials can be modeled as an unbiased normal distribution with a mean value equal to the estimate, and a variance equal to the number of counts \citep{Daun2021}. Therefore, the estimate and its standard deviation are given by:
\begin{equation}
	\mu_{1-2} = \frac{N_{1-2}}{N_\mathrm{b}} \quad , \quad
	\sigma_{1-2} = \frac{\sqrt{N_{1-2}}}{N_\mathrm{b}}
\end{equation}
where $N_{1-2}$ are the number of bundles emitted by element 1 and absorbed by element 2 and $N_\mathrm{b}$ are the total number of bundles emitted by element 1. To calculate how these uncertainties propagate through the solution, the software package \textit{Measurements.jl} \citep{Giordano2016} for the Julia Programming Language \citep{Bezanson2017} was used, which automatically propagates uncertainties through functionally correlated mathematical operations, including solution of linear systems. The growth or decay of uncertainty from application of the proposed formulation, was quantified by calculating the ratio of the RMS of the relative output uncertainties to the RMS of the relative input uncertainties:
\begin{equation}
	r = \frac{(\sigma/\mu)_\mathrm{RMS,out}}{(\sigma/\mu)_\mathrm{RMS,in}}
	\label{eq:uncertaintypropagation}
\end{equation}
For the output the total radiant power was used. If the ratio of equation \eqref{eq:uncertaintypropagation} exceed unity, the relative uncertainty has increased, if it is approximately equal to unity, the relative uncertainty is preserved, and if it is less than unity, the relative uncertainty has decreased. These calculations were performed on non-reflecting non-scattering unit squares in radiative equilibrium with extinction $\beta=1$, the volume and faces divided according to $2\times 2$, $3\times 3$, $4\times 4$, $5\times 5$, $6\times 6$, $7\times 7$, and $8\times 8$, leading to exchange factor matrices ranging from dimension 12 to 96. The contour plot of figure \ref{fig5} shows the result of this calculation, repeating the calculation with $10^4$, $10^5$, $10^6$, $10^7$, $10^8$ sample rays for each discretization. The only input uncertainties used in this calculation come from the estimate of $\mathbf{F}$ and estimates or results equal to zero were excluded from the relative uncertainty calculations. From figure \ref{fig5}, since all of the results are well below unity, it can be concluded that application of the proposed formulation significantly decreases the relative uncertainty from the input to the output. This is natural, however, since the largest relative uncertainties come from the smallest exchange factors, which generally affect the results the least. Furthermore, this effect becomes more pronounced with increasing discretization, which is clear from figure \ref{fig5} from the downward trend in the direction of more domain divisions.

\begin{figure}
\centering
\includegraphics[width=0.9\linewidth]{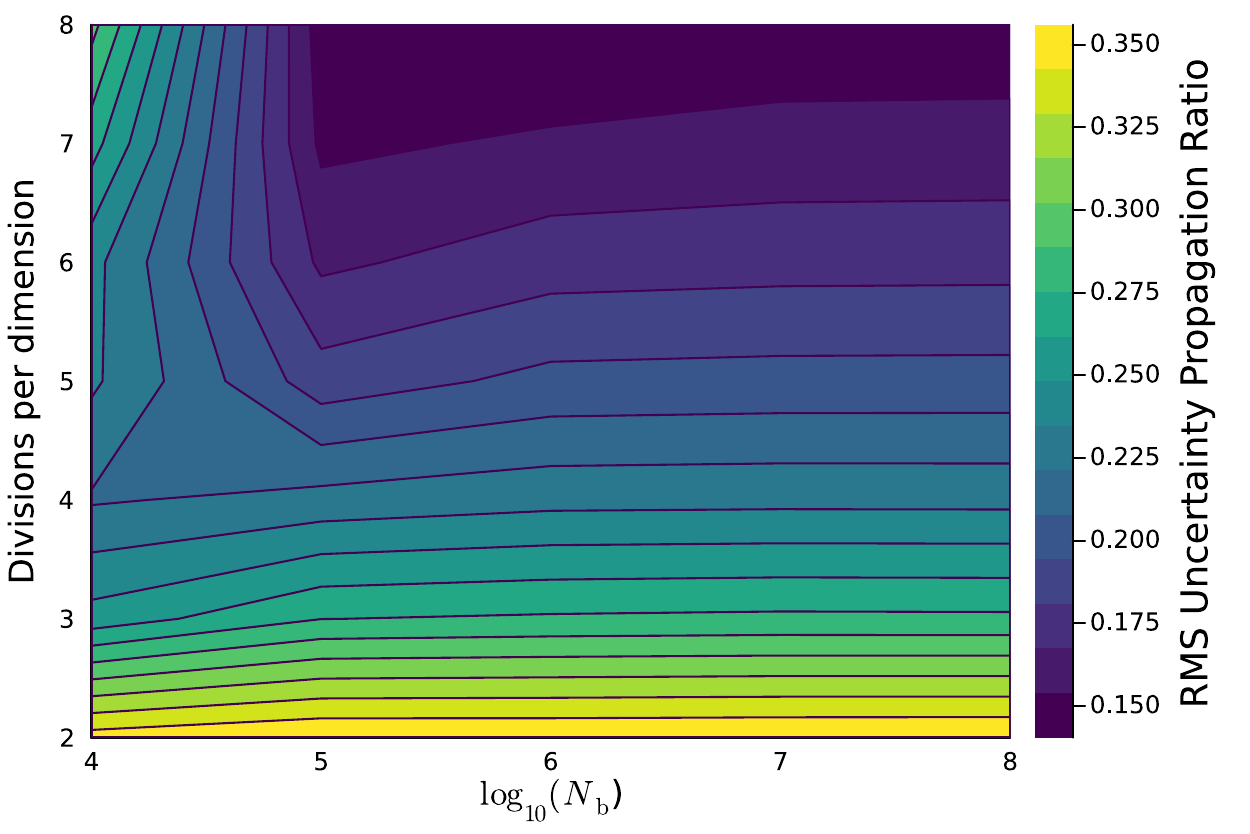}
\caption{Ratio of RMS relative output uncertainty $(\sigma/\mu)_\mathrm{out}$ (calculated from $\mathbf{j}$) to RMS relative input uncertainty $(\sigma/\mu)_\mathrm{in}$ (calculated from $\mathbf{F}$) as a function of the number of ray samples used for obtaining $\mathbf{F}$ (x-axis) and of the number of domain divisions per dimensions (y-axis) in a two-dimensional square geometry. Estimates or results equal to zero were excluded from the calculations.}\label{fig5}
\end{figure}

\subsubsection{Range of Applicability}

The proposed formulation requires the inversion of $\mathbf{D}=\mathbf{I}-\mathbf{R}^\top$ when all temperatures are fixed. To show the limitations of this approach, the following subsections explore the range of applicability of the proposed formulation for two important cases: transparent media and general participating media.

\subsubsection*{Transparent Media}
\label{sec:range_applicable_transparent}

Figure \ref{fig6} shows the absolute value of the determinant of $\mathbf{I}-\mathbf{K}$ of a transparent rectangular enclosure, as a function of uniform reflectivity in the enclosure, and for two different aspect ratios. The upper and lower lines in the plots ($M=2$ and $M=20$) represent the boundaries of the regions, as higher or lower discretizations converge onto these bounds. For $\rho=0$, the system is perfectly conditioned, and as $\rho$ approaches unity, the conditioning approaches a singularity. This phenomenon occurs earlier with increased discretization and for higher aspect ratio.

\begin{figure}
\centering
\includegraphics[width=1.0\linewidth]{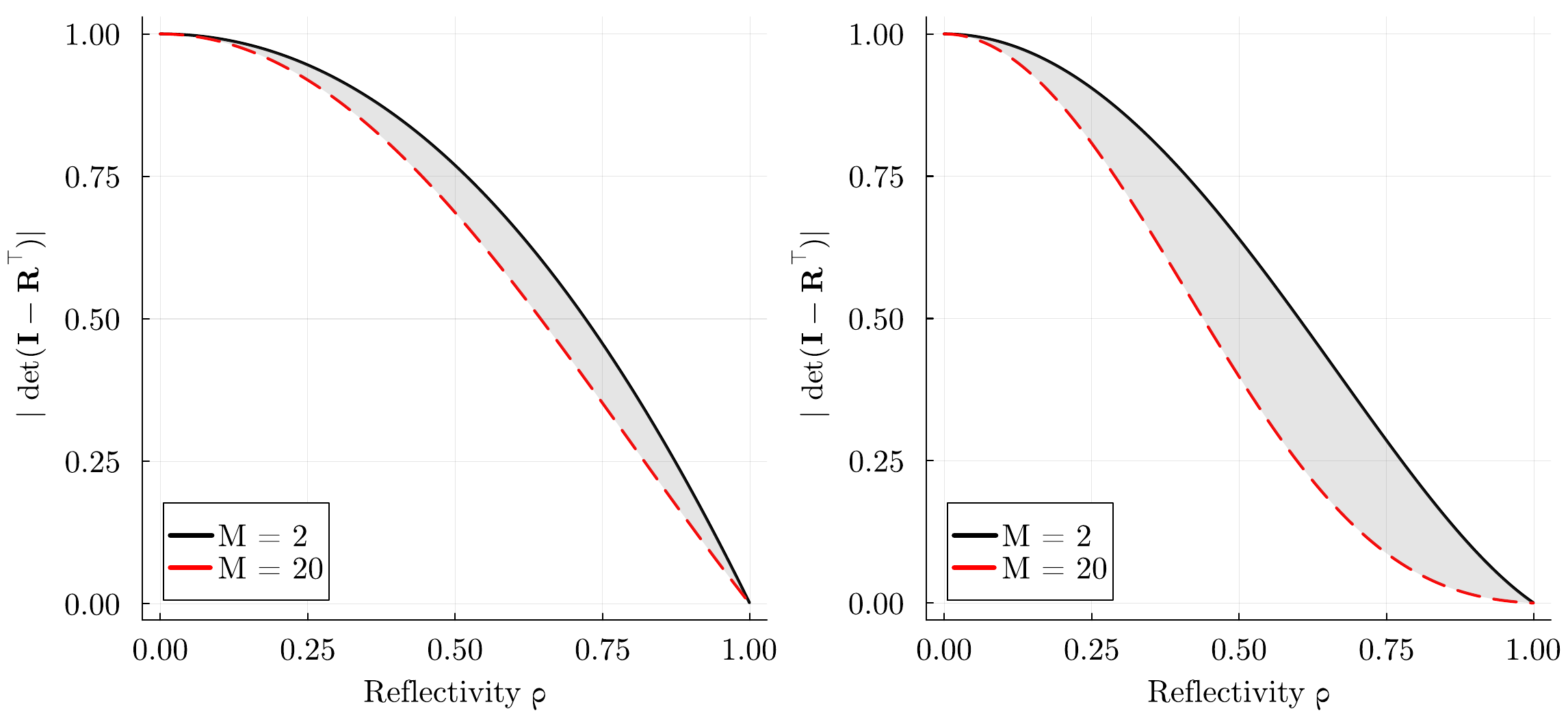}
\caption{Absolute value of the determinant of $\mathbf{I}-\mathbf{R}^\top$ for a transparent medium in a rectangle of an aspect ratio of unity (left) or an aspect ratio of 10 (right) as a function of uniform reflectivity. The top lines are for $M=2$ and the lower lines for $M=20$ and remaining discretizations fall within the bounded region or converge onto the bounds. This figure was created by ray tracing $10^7$ rays for each aspect ratio. There is a total of $4M$ wall elements.}\label{fig6}
\end{figure}

\subsubsection*{General Participating Media}

The conditioning of the linear system depends on whether the mixed-boundary matrix $\mathbf{M}$ contains rows of $\mathbf{D}$ alone or also includes rows of $\mathbf{C}$. To illustrate this, figures \ref{fig8a} and \ref{fig8b} show contour plots of the absolute value of the determinant of the system matrix for the two boundary-condition configurations, for a two-dimensional square enclosure with absorbing walls. To include a large range, the figures are triple logarithmic, and the parameter space is the number of subdivisions $M$ along each wall of the square enclosure ($m+n=4M+M^2$ where $m+n$ is the number of elements), the optical depth $\tau=\beta$ along a wall (unit-side square), and the single scattering albedo $\omega$. Both figures were produced from ray tracing $10^6$ rays uniformly per $(M,\tau)$ combination; the same $\mathbf{F}$ is reused across the four $\omega$ panels since the exchange factors depend on the total extinction but not on the absorption-scattering split.

Figure \ref{fig8a} shows $|\det(\mathbf{I}-\mathbf{R}^\top)|$, which corresponds to the all-temperatures-prescribed configuration where $\mathbf{M}=\mathbf{D}=\mathbf{I}-\mathbf{R}^\top$. For $\omega=0$, the system is perfectly conditioned across the entire parameter space, since $\mathbf{R}=\mathbf{0}$ and $\mathbf{D}=\mathbf{I}$. For $\omega>0$, with high extinction and highly discretized domains, a region of computational infeasibility begins to form, where numerical precision beyond 256-bit would be required (this region is truncated in the figure). As $\omega$ approaches unity, this region grows, characterizing a conditioning frontier intrinsic to the pure reflection-scattering operator $\mathbf{I}-\mathbf{R}^\top$.

Figure \ref{fig8b} shows $|\det(\mathbf{M})|$ for the mixed-boundary configuration in which all gas volumes are placed in radiative equilibrium (every gas row of $\mathbf{M}$ taken from $\mathbf{C}$) while surface emissive powers are prescribed (surface rows taken from $\mathbf{D}$). This is the natural boundary-condition structure for the participating-media problems considered throughout this paper, including the Crosbie--Schrenker validation of Section \ref{sec:Crosbie_Schrenker}, the medium-scale problem of Section \ref{sec:medium_scale}, and the complex geometry of Section \ref{sec:methodology1}. The conditioning is essentially independent of $\omega$ across the four panels, with the conditioning frontier set by $M$ and $\tau$ alone. This reflects that the gas rows of $\mathbf{M}$ are taken from $\mathbf{C}=\mathbf{I}-\mathbf{F}^\top$, which is purely geometric and carries no dependence on the absorption-scattering split.

Comparing the two configurations, the all-temperatures-prescribed system of figure \ref{fig8a} is best conditioned at low $\omega$, where $\mathbf{D}\to\mathbf{I}$, and degrades as $\omega\to 1$ to a conditioning frontier comparable to that of the mixed-boundary system. The mixed-boundary system of figure \ref{fig8b} inherits the geometric conditioning of $\mathbf{C}$ at all $\omega$, so it has no easy-case regime but also does not deteriorate further with increasing scattering.

\begin{figure}
\centering
\includegraphics[width=1.0\linewidth]{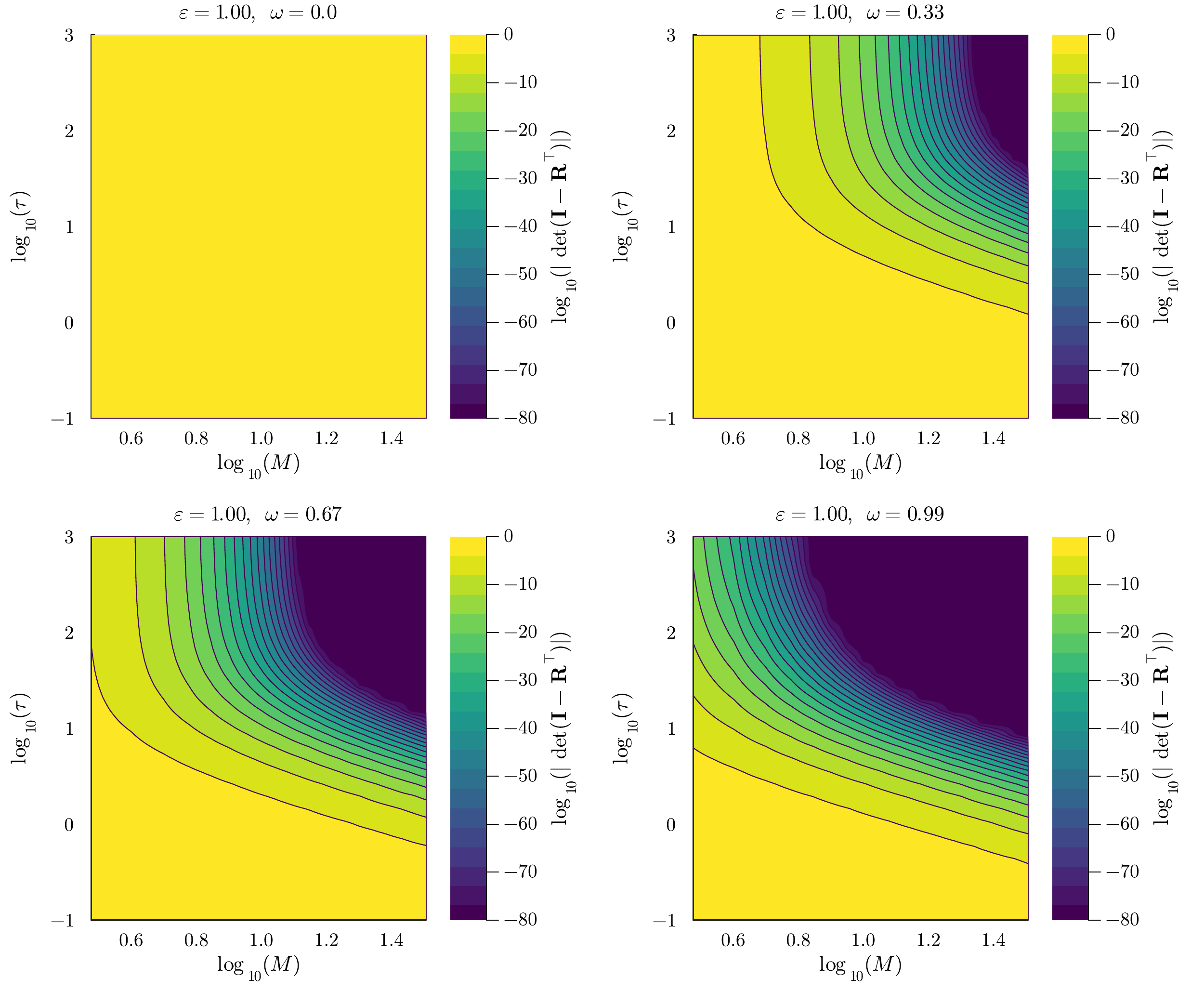}
\caption{Absolute value of the determinant of $\mathbf{I}-\mathbf{R}^\top$ for a two-dimensional square enclosure with absorbing walls, $M$ subdivisions along each wall, optical depth $\tau$ along each wall, and four uniform values of the single scattering albedo $\omega$. This corresponds to the all-temperatures-prescribed configuration where $\mathbf{M}=\mathbf{D}=\mathbf{I}-\mathbf{R}^\top$. The system is perfectly conditioned at $\omega=0$ and a conditioning frontier emerges and grows as $\omega$ approaches unity. Computed from ray tracing $10^6$ rays uniformly per $(M,\tau)$ combination.}\label{fig8a}
\end{figure}

\begin{figure}
\centering
\includegraphics[width=1.0\linewidth]{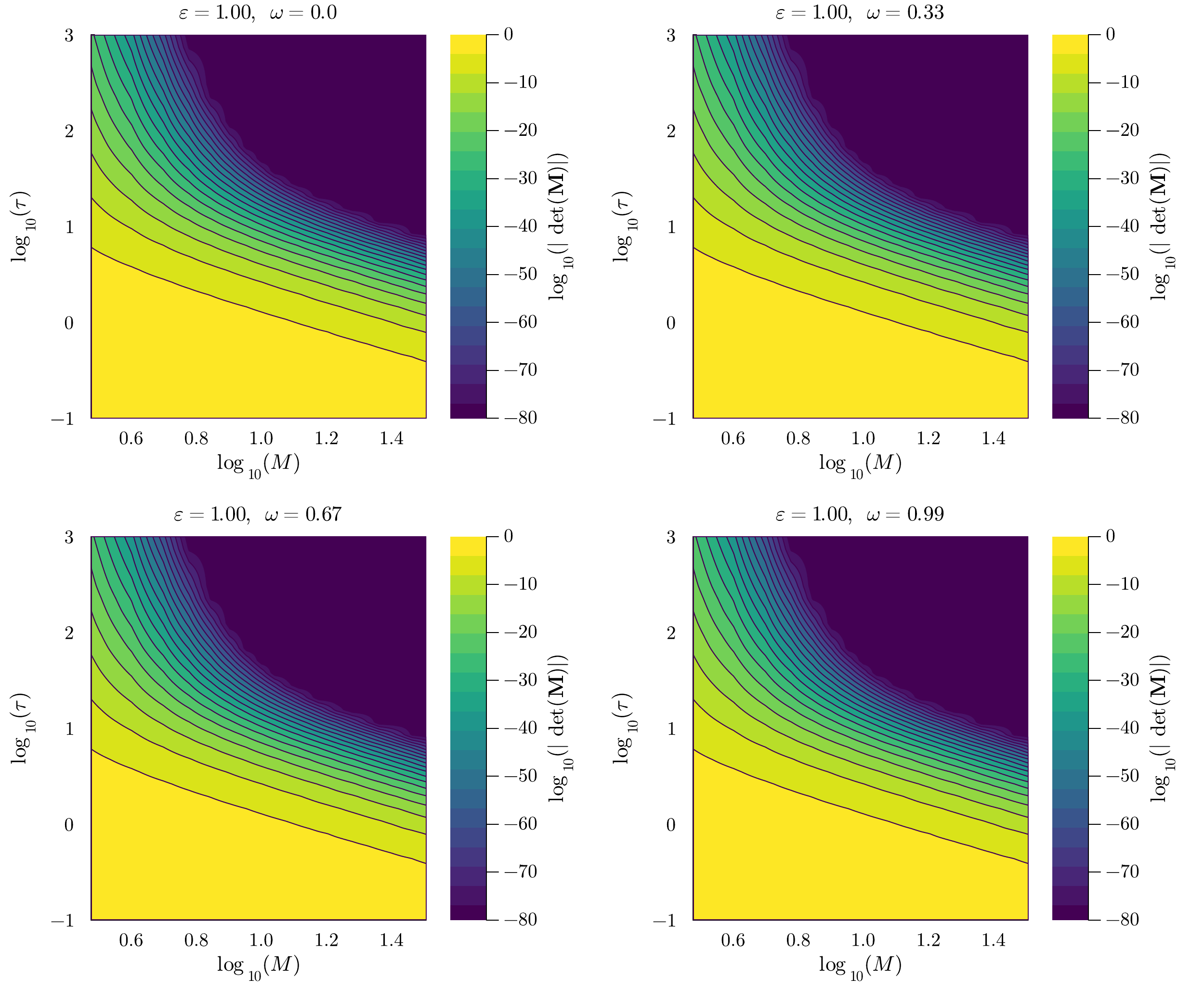}
\caption{Absolute value of the determinant of $\mathbf{M}$ for the same enclosure as figure \ref{fig8a}, but with all gas volumes in radiative equilibrium (gas rows of $\mathbf{M}$ taken from $\mathbf{C}$) and prescribed surface emissive powers (surface rows from $\mathbf{D}$). The conditioning frontier is essentially independent of $\omega$, set by $M$ and $\tau$ alone, reflecting that $\mathbf{C}=\mathbf{I}-\mathbf{F}^\top$ is purely geometric.}\label{fig8b}
\end{figure}

\subsubsection{Application to a Complex Geometry}

Figure \ref{fig9} shows the result of applying the proposed formulation to a non-reflecting non-scattering complex star-shaped geometry with volumetric in-flux of energy. This class of problems with distributed volumetric energy sources is fundamental to the analysis of combustion chambers, furnace design, and other high-temperature industrial applications where radiative heat transfer governs thermal behavior. The star-shaped geometry was chosen to simultaneously assess the method's performance in geometrically complex scenarios. The inner pentagon is assembled from 5 triangles with all sides transparent, while the arms of the star are impenetrable on the outside and transparent towards the central pentagon. The outer impenetrable boundaries of the arms are set to fully absorbing at a temperature of zero Kelvin. The medium in each overall triangle has a uniform source in-flux of 1 kW per triangle giving 10 kW in-flux in total. The participating medium has unity extinction and is absorbing-emitting non-scattering. The temperature field of the solution of figure \ref{fig9} displays the expected symmetry and cooling of the medium towards the tips of the arms. At the corners of the pentagon the temperature is slightly lower than at the center, due to cooling from the adjacent walls. It is concluded that a complex geometry does not complicate application of the proposed formulation, but rather increases the complexity demand imposed on the ray tracing algorithm used for obtaining $\mathbf{F}$.

\begin{figure}
\centering
\includegraphics[width=0.9\linewidth]{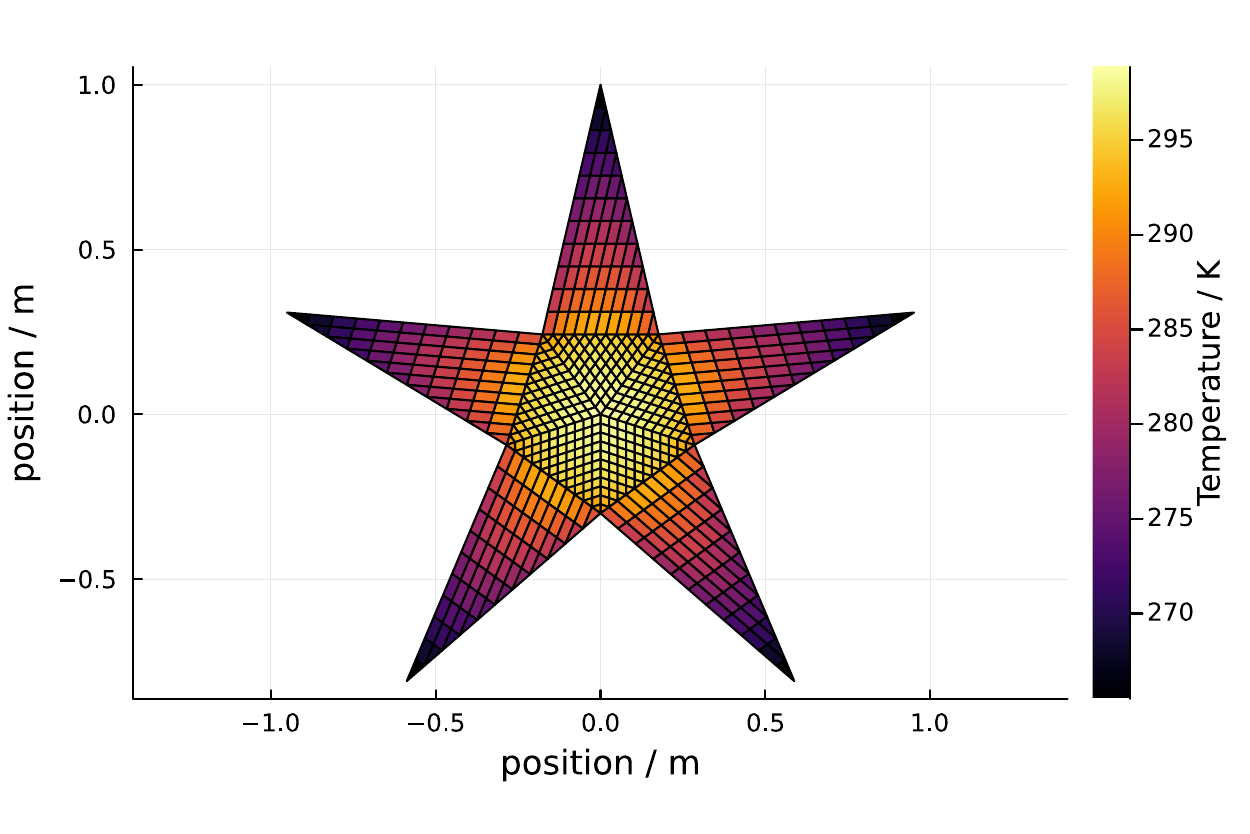}
\caption{Solution of a non-reflecting non-scattering complex star-shaped geometry with volumetric in-flux of energy. Each overall triangle has a uniform positive source in-flux of 1 kW per triangle. The arms have fully absorbing boundaries at zero Kelvin.}\label{fig9}
\end{figure}

\subsubsection{Application to a Transparent Three-Dimensional Problem}
\label{sec:3d}

Figure \ref{fig10} shows the proposed formulation applied to a non-reflecting three-dimensional unit cube vacuum enclosure, meaning a fully transparent medium. In this case one can entirely neglect the $\mathbf{F}_\mathrm{sg}$, $\mathbf{F}_\mathrm{gs}$ and $\mathbf{F}_\mathrm{gg}$ matrix blocks and focus solely on the $\mathbf{F}_\mathrm{ss}$ block, validating that the method applies equally to transparent and participating media. The proposed exchange factor formulation operates as a graph equilibrium algorithm, working solely with probabilities and transfer rates of a conserved quantity. This mathematical generality, rooted in graph theory principles, enables the method to handle diverse geometric configurations and suggests potential applications beyond radiative heat transfer. Its foundational character stems from this abstraction to probabilistic transfers on a network, which maintains validity regardless of the specific physical interpretation of the nodes and connections. This is the reason that neglecting the gas parts of $\mathbf{F}$ does not affect the validity of the solution. In figure \ref{fig10}, each face consists of $21\times 21$ subfaces and the top cold face has a fixed temperature of zero Kelvin while the bottom hot surface has a fixed temperature of 1000 Kelvin. The vertical faces have a prescribed source flux of zero, meaning they are entirely re-radiating. The three-dimensional plot serves as a qualitative validation and shows the expected resulting temperature gradient across the vertical faces. To create figure \ref{fig10} the view factor matrix was obtained using the method of Narayanaswamy \citep{Narayanaswamy2015} and no interpolation was applied when rendering the figure.

\begin{figure}
\centering
\includegraphics[width=0.9\linewidth]{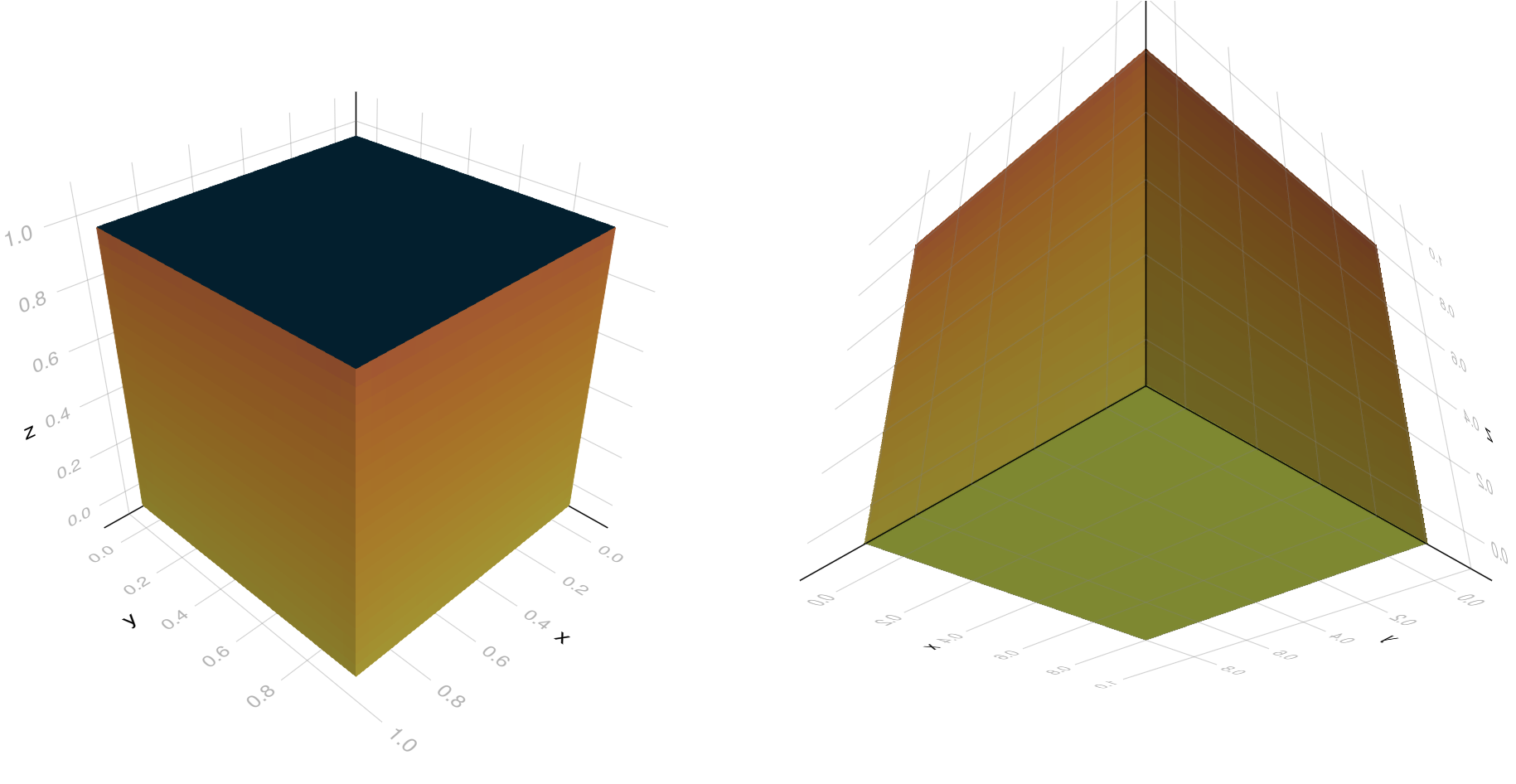}
\caption{Temperature field solution of a transparent three-dimensional problem. The top face has a prescribed temperature of zero Kelvin and the bottom face has a prescribed temperature of 1000 Kelvin. The vertical faces have a prescribed source flux of zero. Each face consists of $21\times 21$ squares, each decomposed into two triangles for plotting. Each triangle has been coloured according to the temperature of its square with no interpolation applied. The view factor matrix was obtained using the method of Narayanaswamy \citep{Narayanaswamy2015}.}\label{fig10}
\end{figure}

\subsubsection{Application To a Medium-Scale Problem}
\label{sec:medium_scale}

To demonstrate that the proposed formulation is applicable to general medium-scale problems, it was applied to a two-dimensional unit square geometry with $\beta=1$ and 151 splits in both dimensions. This yields a dense exchange factor matrix of dimension $4\cdot 151+151^2=23405$, meaning all of the remaining matrices will be of this size. For demonstration, $10^9$ ray samples were traced in parallel using CPU multi-threading in this geometry. The ray tracing took 162 seconds. Subsequent application of the proposed formulation to a problem with an absorbing-emitting medium in radiative equilibrium with $\kappa=1$ and $\sigma_\mathrm{s}=0$, including all of the necessary matrix operations, took 23 seconds, using Julia's built-in backslash operator \citep{Bezanson2017} for the single linear solve $\mathbf{M}\mathbf{j}=\mathbf{h}$. Figure \ref{fig11} shows the results, comparing the solution to the pure scattering results of Crosbie and Schrenker \citep{Crosbie1984} (Table \ref{tab:table1}). Solving the same problem, but instead with $\sigma_\mathrm{s}=1$ and $\kappa=0$, took 22 seconds, and did not require additional ray tracing since $\beta$ was unchanged, and yielded an identical result in terms of the non-dimensional source function within numerical precision across all 22{,}801 gas elements. This is consistent with the albedo invariance of $\mathbf{j}$ established symbolically in Section~\ref{sec:radiant_power_albedo}.

\begin{figure}
\centering
\includegraphics[width=0.9\linewidth]{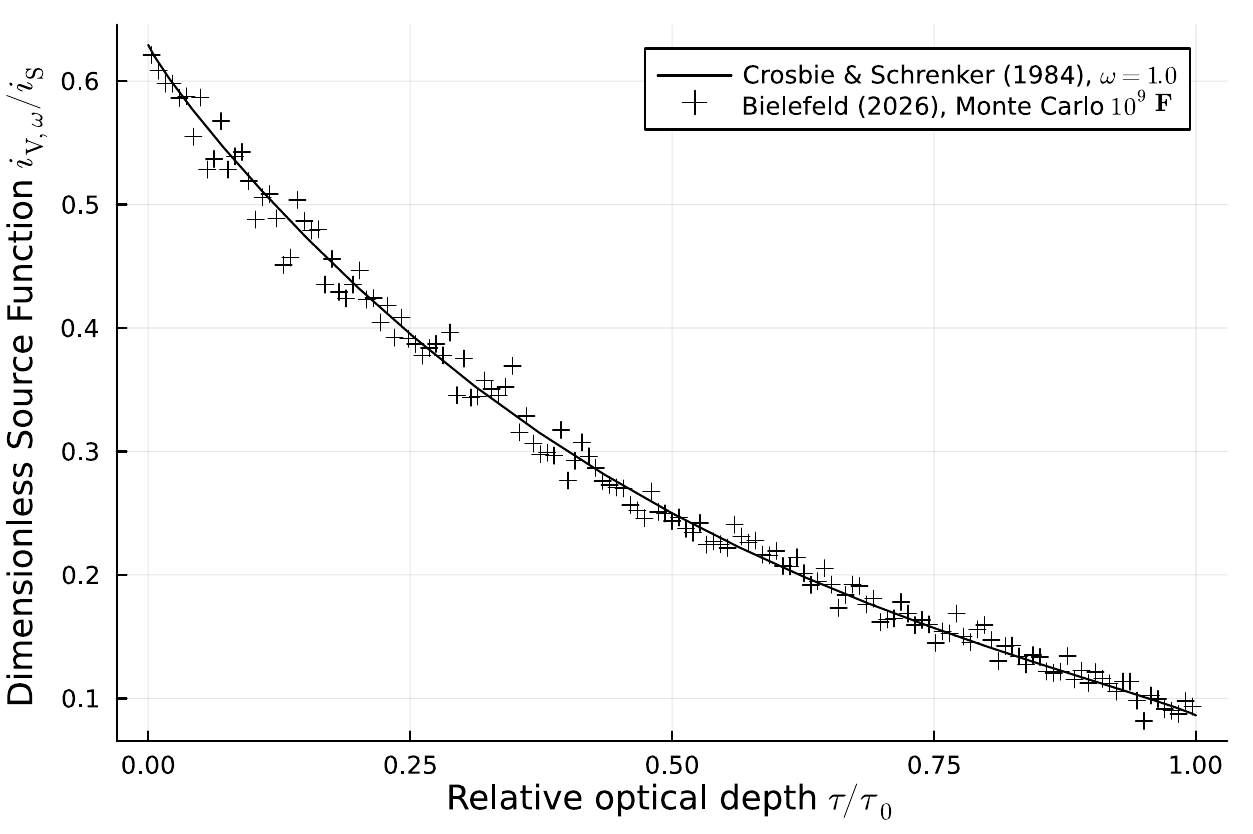}
\caption{Comparison of the centreline dimensionless source function perpendicular to incident radiation. Crosses show the results of application of the proposed formulation to a $151\times 151$ square two-dimensional either absorbing-emitting ($\kappa=1$, $\sigma_s=0$) or scattering ($\kappa=0$, $\sigma_s=1$) medium in radiative equilibrium and $\beta=1$, using an $\mathbf{F}$ obtained from $10^9$ ray samples, while the solid line shows the results of Crosbie and Schrenker in a purely scattering medium ($\kappa=0$, $\sigma_s=1$).}\label{fig11}
\end{figure}

\section{Discussion}
\label{sec:discussion_overall}

\subsection{Scalability Assessment}
\label{sec:scalability}

The proposed formulation requires a single linear solve $\mathbf{M}\mathbf{j} = \mathbf{h}$ to obtain the radiative transfer solution. The matrices $\mathbf{A}$, $\mathbf{R}$, $\mathbf{C}$, and $\mathbf{D}$ are constructed by Hadamard products and additions involving $\mathbf{F}$ and $\mathbf{B}$, which preserve the sparsity pattern of $\mathbf{F}$. Therefore, the sparsity of $\mathbf{F}$ propagates through the entire formulation to $\mathbf{M}$.

For low-extinction problems or problems with global radiative connectivity, $\mathbf{F}$ is dense and the dimension $m+n$ determines the memory and computational footprint. For typical desktop computing hardware, this limits applicability to medium-scale problems. For high-performance computing clusters, with greater memory availability and the option of using storage disk space to supplement working memory, larger dense problems become tractable, with the trade-off of increased time demand.

For high-extinction problems, $\mathbf{F}$ is naturally sparse: rays attenuate quickly and only nearby elements have non-negligible exchange. The proposed formulation then operates on sparse matrices throughout, enabling sparse direct or iterative linear solvers to be used for $\mathbf{M}\mathbf{j} = \mathbf{h}$. In this regime the method scales to large problems, with the achievable size determined primarily by the sparsity pattern of $\mathbf{F}$ rather than by its nominal dimension.

\subsection{Computational Complexity}

The computational complexity of the proposed formulation is dominated by the single linear solve $\mathbf{M}\mathbf{j} = \mathbf{h}$. For a dense system of dimension $m+n$, this requires one LU factorization followed by one back-substitution, totaling approximately $\frac{2}{3}(m+n)^3 + \mathcal{O}\left((m+n)^2\right)$ floating-point operations. The construction of $\mathbf{A}$, $\mathbf{R}$, $\mathbf{C}$, and $\mathbf{D}$ from $\mathbf{F}$ and $\mathbf{B}$ requires only Hadamard products and additions, costing $O((m+n)^2)$ operations and contributing negligibly to the total. Once the LU factorization of $\mathbf{M}$ has been computed, subsequent solves with different right-hand sides $\mathbf{h}$ cost only $\mathcal{O}\left((m+n)^2\right)$ operations per solve. This is a standard property of LU factorization rather than specific to the proposed formulation, but it makes parameter studies over boundary conditions at fixed $\mathbf{M}$ particularly inexpensive.

A useful structural observation is that the matrix $\mathbf{C} = \mathbf{I} - \mathbf{A}^\top - \mathbf{R}^\top = \mathbf{I} - \mathbf{F}^\top$ depends only on the geometry through $\mathbf{F}$, not on the optical properties of the surfaces or medium. Optical properties enter the linear system only through $\mathbf{D} = \mathbf{I} - \mathbf{R}^\top$. For parameter studies in which the geometry is held fixed but the optical properties are varied, the geometric matrix $\mathbf{C}$ need only be assembled once.

In the absence of reflection-scattering ($\mathbf{B} = \mathbf{0}$), $\mathbf{R} = \mathbf{0}$ and $\mathbf{D} = \mathbf{I}$. When all temperatures are known so that $\mathbf{M} = \mathbf{D} = \mathbf{I}$, the linear solve reduces to $\mathbf{j} = \mathbf{h}$ with no computational work beyond the ray-traced exchange factors. This makes the absorption-only case ideal for dynamic time-stepping simulations in which the geometry and extinction evolve slowly and the radiation field must be updated frequently, as long as photon time-of-flight effects can be neglected.

\subsection{Comparison to Existing Methods}

The closest methodological lineage of the proposed formulation is Hottel's zonal method \citep{Hottel1958, Hottel1967}, which also partitions the radiative transfer problem into surface and gas elements with exchange quantities between them, and includes reflection-scattering through Noble's matrix formulation \citep{Noble1975}. The framework presented in this paper shares this block structure but differs from Hottel's approach in two respects.

First, the reflection-scattering treatment via the column-constant matrix $\mathbf{B}$ and the Hadamard product partitioning $\mathbf{A} = \mathbf{F}\circ(\mathbf{1}-\mathbf{B})$, $\mathbf{R} = \mathbf{F}\circ\mathbf{B}$ avoids a discrepancy present in Noble's matrix formulation of Hottel's method. Sections \ref{sec:hottel_symbolic} and \ref{sec:hottel_numerical} demonstrate this discrepancy both symbolically and numerically: for a medium in radiative equilibrium with intermediate scattering albedo, Noble's formulation produces non-zero element-wise source terms that compensate only in sum, while the proposed formulation yields element-wise zero source terms as physically expected.

Second, the energy conservation property $\mathbf{1}^\top\mathbf{C} = \mathbf{0}^\top$ holds unconditionally as an algebraic identity (Theorem~\ref{thm:unconditional_conservation}), independent of the boundary condition structure or the distribution of reflection-scattering coefficients.

For transparent enclosures with gray-diffuse surfaces and no participating medium, the proposed formulation specializes to classical radiosity in power form: with $\mathbf{F}_{sg} = \mathbf{F}_{gs} = \mathbf{F}_{gg} = \mathbf{0}$ and column-constant $\rho$, the equation $\mathbf{D}\mathbf{j} = \mathbf{e}$ becomes $(\mathbf{I} - \rho\mathbf{F}^\top)\mathbf{j} = \mathbf{e}$, which is the classical radiosity equation. Section~\ref{sec:symbolic_validation} verifies this specialization symbolically for the canonical cases of infinite parallel plates and concentric cylinders, recovering the textbook closed-form formulas exactly as algebraic identities.

Other methods, such as moment-based methods, spherical harmonics, discrete ordinates, and finite element methods, are inherently numerical approximations that depend on truncation either of an infinite series of equations or of a continuum of angles. The proposed formulation does not resort to such truncation, apart from the spatial discretization of the domain into finite elements. The steady-state distribution over all multi-bounce paths is captured through the matrix inverse $(\mathbf{I} - \mathbf{R}^\top)^{-1}$ implicit in the linear solve.

The proposed formulation may be viewed as an augmented form of Monte Carlo. For highly reflecting-scattering domains, such as clouds or snow, the analytical path tracing significantly outperforms basic Monte Carlo methods. The timings in Section~\ref{sec:medium_scale} illustrate this: ray tracing $10^9$ first-interaction rays required 162 seconds, while the subsequent matrix solve took at most 23 seconds. Conventional Monte Carlo for pure scattering ($\omega = 1$) would require tracing each ray through numerous scattering events: the expected path length grows without bound as $\omega \to 1$, multiplying the ray tracing cost accordingly. The proposed method instead traces rays only to first interaction, then resolves all subsequent scattering through the matrix framework. Additionally, once $\mathbf{F}$ is computed, problems with different combinations of $\kappa$ and $\sigma_s$ summing to the same $\beta$ can be solved without further ray tracing, enabling rapid parameter studies.

Furthermore, the simplified first-interaction ray tracing provides a more accessible pathway to the programming of radiative transfer solvers, while simultaneously making the ray tracing highly suitable for both CPU and GPU parallel execution. Unlike conventional multiple scattering ray tracing, where rays terminate after unpredictable numbers of bounces creating workload imbalance, first-interaction tracing has predictable and uniform workload per ray. This also provides a validation pathway for developers of new Monte Carlo solvers: by implementing first-interaction tracing and applying the proposed formulation, one obtains exact multiple reflection-scattering solutions that can serve as reference benchmarks for validating full Monte Carlo implementations on arbitrary geometries.

\subsection{Limitations and Future Work}

Limitations of the proposed formulation include the general need for ray tracing. While Monte Carlo ray tracing can achieve arbitrary accuracy with quantifiable bounds, its convergence rate is proportional to $N_\mathrm{b}^{-1/2}$, meaning high ray sample counts are preferred. To preserve accuracy under increased discretization the number of ray samples per cell should remain constant. For high resolution domains with several elements, this leads to an unfortunate scaling of the total necessary ray bundles. Therefore, to harness the full potential of the proposed formulation when applied to larger problems, some form of variance reduction should be applied. The non-linear optimization-based exchange factor smoothing algorithms commonly encountered in the literature, while theoretically capable of addressing problems of this structure, face significant practical limitations as problem size increases, including memory constraints, numerical instability, and diminishing convergence rates. Therefore, novel and efficient algorithms for exchange factor smoothing are needed, which remain computationally tractable for high resolution domains.

For large-scale problems with global radiative connectivity, the proposed formulation is limited primarily by the memory requirements of the dense matrices $\mathbf{A}$, $\mathbf{R}$, $\mathbf{C}$, and $\mathbf{D}$ when $\mathbf{F}$ is dense. To extend applicability to such problems, domain decomposition followed by sequential iterative solution where the boundary fluxes match between decomposed parts is a natural direction. The proposed formulation lends itself to this type of decomposition, since discrete angular in-flux information can be obtained from the entries of the entry-wise product $(\mathbf{A}^\top+\mathbf{R}^\top)_i \circ \mathbf{j}$, where the index $i$ denotes the row, meaning this vector gives the in-flux of power onto row element $i$ from each column element $j$ in the domain. A general algorithm for accomplishing such a decomposed solution would be a valuable extension.

If one has a priori knowledge of the problem to be solved, one may sample the domain selectively for $\mathbf{F}$, for example based on the emissive power for non-reflecting non-scattering problems, or based on the previous $\mathbf{j}$ for general dynamic problems, rather than the uniform sampling which was used in this paper. This could potentially allow a significant reduction in the number of samples necessary to accurately resolve the radiative transfer field. This importance sampling would be especially relevant for dynamic problems, where $\mathbf{F}$ needs to be updated for each time step, for example for convective transport of a fluid with spatially and temporally varying extinction $\beta$.

The proposed formulation is not applicable to varying angular distributions of emission and reflection-scattering, since these are fixed from the ray tracing distributions used to obtain $\mathbf{F}$ (uniform for volumes and Lambertian for surfaces in this paper). A method to include such behaviour into the proposed formulation would broaden its applicability and generality.

The proposed formulation is a grey approximation and is not applicable to spectral problems, limiting the applicability to real-world problems where wavelength-dependent radiative properties are significant.

\section{Conclusion}
\label{sec:conclusion}

This paper presented a matrix formulation of the scalar laws of radiative energy balance for coupled mixed boundary condition radiative transfer problems on general domains. Starting from a non-dimensional first-interaction exchange factor matrix $\mathbf{F}$, the framework partitions $\mathbf{F}$ via Hadamard products into single-step absorption and reflection-scattering matrices, from which a mixed-boundary linear system is constructed by row-picking. The framework admits a unique non-negative solution for non-negative source terms whenever the maximum reflection-scattering coefficient is strictly less than unity, with unconditional energy conservation to machine precision as an algebraic identity.

Validation was provided symbolically against the textbook closed-form solutions for infinite parallel plates and concentric cylinders, recovering both as exact algebraic identities, and numerically against the diffusion approximation and against the results of Crosbie and Schrenker. 

A comparison with Noble's matrix formulation of Hottel's zonal method revealed a discrepancy in that classical approach at intermediate scattering albedos, not previously reported to the author's knowledge, which the proposed framework avoids. The method requires a single linear solve whose sparsity inherits from that of $\mathbf{F}$, making it applicable to medium-scale dense problems and to large-scale sparse problems with high extinction.

\section*{Conflict Of Interest}

The author acknowledges previous employment as an intern at the boiler and power plant company Aalborg Energie Technik A/S during autumn 2022, while completing the 9th semester of his Master's in Thermal Energy and Process Engineering at Aalborg University, and current employment at the engineering consulting firm Added Values P/S. This research was conducted independently, outside of regular work hours, and without financial or material support from either organization. The views and conclusions presented in this paper represent the author's personal work and do not necessarily reflect the positions or policies of the aforementioned employers.

\section*{Funding}

This research did not receive any specific grant from funding agencies in the public, commercial, or not-for-profit sectors.

\section*{Declaration of generative AI and AI-assisted technologies in the writing process}
During the preparation of this work the author used Anthropic's Claude Sonnet 3.5 \citep{Claude2024} and later Claude Sonnet 4.0. These models provided invaluable assistance, which supported all stages of this research, from method development through manuscript preparation. In particular, Claude Sonnet 3.5 facilitated iterative development and refinement of key concepts and arguments, which led to the fundamental breakthrough during the summer of 2024. Claude Sonnet 4.0 assisted in formulating the proofs of physical correctness. Claude Opus 4.5 assisted with final manuscript review and language refinement. The author used Anthropic's Claude Opus 4.7 for assistance in identifying the methodology-error in the first version of this article, through symbolic validation against canonical closed-form solutions. After using these tools/services, the author reviewed and edited the content as needed and takes full responsibility for the content of the published article.

\section*{Acknowledgements}

The author wishes to acknowledge the following assistance: His friends Snorre Marstrand Enevoldsen and Sune Borkfelt who helped by proof reading the English language of the draft manuscript. The author's former supervisors at Aalborg University, Thomas Joseph Condra and Kim Sørensen, as well as the author's current colleague Gorm Bruun Andresen who all helped by proof reading the English language of the draft manuscripts and by suggesting improvements to the presentation. The financial support from the author's parents during his education.

An earlier version of this manuscript was submitted to a peer-reviewed journal. The author is grateful to the anonymous reviewers whose thoughtful feedback and constructive suggestions substantially strengthened this work, leading to a more comprehensive validation framework and enhanced comparison to existing methods.

\section*{Software}

All calculations were performed using the Julia programming language \citep{Bezanson2017}. The two-dimensional figures were created using the Julia package Plots.jl \citep{PlotsJL} and the three-dimensional figure was rendered using the Julia package Makie.jl \citep{DanischKrumbiegel2021}.

\appendix

\appendix

\section{Symbolic Validation Scripts}
\label{app:symbolic_scripts}

This appendix provides the SymPy \citep{SymPy2017} scripts used for the symbolic validation against infinite parallel plates and concentric cylinders presented in Section~\ref{sec:symbolic_validation}. The scripts are self-contained and require only a standard SymPy installation. They produce all symbolic identities and matrix expressions reported in Sections \ref{sec:symbolic_plates} and \ref{sec:symbolic_cylinders}.

\subsection{Common Helper Function}
\label{app:helpers}

The following helper function constructs the system matrices under the corrected single-step definitions. It is reused in both validation cases.

\begin{verbatim}
import sympy as sp
from sympy import Matrix, symbols, simplify, factor, eye

def build_matrices(F_view, B):
    """Build A, R, C, D under the single-step formulation."""
    n = F_view.shape[0]
    A = Matrix([[(1 - B[i, j]) * F_view[i, j] for j in range(n)]
                for i in range(n)])
    R = Matrix([[B[i, j] * F_view[i, j] for j in range(n)]
                for i in range(n)])
    A = simplify(A)
    R = simplify(R)
    C = simplify(eye(n) - A.T - R.T)
    D = simplify(eye(n) - R.T)
    return A, R, C, D
\end{verbatim}

\subsection{Parallel Plates}
\label{app:plates_script}

\begin{verbatim}
rho, sigma_, T1, T2, A = symbols('rho sigma T_1 T_2 A', positive=True)
eps = 1 - rho
Eb1 = sigma_ * T1**4
Eb2 = sigma_ * T2**4
e1 = eps * Eb1 * A
e2 = eps * Eb2 * A

F_view = Matrix([[0, 1], [1, 0]])
B = Matrix([[rho, rho], [rho, rho]])

A_mat, R_mat, C_mat, D_mat = build_matrices(F_view, B)

# Solve mixed system M j = h with M = D (both temperatures known)
j = simplify(D_mat.solve(Matrix([e1, e2])))
g_a = simplify(A_mat.T * j)
q1 = simplify(e1 - g_a[0])
q2 = simplify(e2 - g_a[1])

# Textbook reference formula
Q_textbook = (Eb1 - Eb2) * A / (1/eps + 1/eps - 1)
assert simplify(q1 - Q_textbook) == 0  # exact symbolic match
assert simplify(q1 + q2) == 0          # energy conservation
\end{verbatim}

\subsection{Concentric Cylinders}
\label{app:cylinders_script}

\begin{verbatim}
rho, sigma_, T1, T2, A1, alpha = symbols(
    'rho sigma T_1 T_2 A_1 alpha', positive=True)
eps = 1 - rho
A2 = A1 / alpha
Eb1 = sigma_ * T1**4
Eb2 = sigma_ * T2**4
e1 = eps * Eb1 * A1
e2 = eps * Eb2 * A2

F_view = Matrix([[0,     1      ],
                 [alpha, 1-alpha]])
B = Matrix([[rho, rho], [rho, rho]])

A_mat, R_mat, C_mat, D_mat = build_matrices(F_view, B)

j = simplify(D_mat.solve(Matrix([e1, e2])))
g_a = simplify(A_mat.T * j)
q1 = simplify(e1 - g_a[0])
q2 = simplify(e2 - g_a[1])

Q_textbook = A1 * sigma_ * (T1**4 - T2**4) / (
    1/eps + alpha * (1/eps - 1))
assert simplify(q1 - Q_textbook) == 0
assert simplify(q1 + q2) == 0
\end{verbatim}

The above assertions all evaluate to True when executed, confirming the symbolic agreement reported in Section~\ref{sec:symbolic_validation}.

\section{Errata and Changes from Version 1}
\label{app:errata}

This version of the manuscript corrects a structural error in the definitions of the absorption and reflection-scattering matrices in v1 (arXiv:2512.22157v1, December 2025). In v1, the matrices were defined via the steady-state path matrix $\mathbf{S}_\infty = (\mathbf{I} - \mathbf{K})^{-1}\mathbf{F}$ as
\begin{equation*}
\mathbf{A}_{\text{v1}} = (\mathbf{1}-\mathbf{B})^\top \circ \mathbf{S}_\infty \circ (\mathbf{1}-\mathbf{B}),
\quad
\mathbf{R}_{\text{v1}} = (\mathbf{1}-\mathbf{B})^\top \circ \mathbf{S}_\infty \circ \mathbf{B},
\end{equation*}
embedding the full multi-bounce Neumann sum inside the operator. When the resulting $\mathbf{R}_{\text{v1}}^\top$ was then used inside $\mathbf{D}_{\text{v1}} = \mathbf{I} - \mathbf{R}_{\text{v1}}^\top$ to solve the linear system $\mathbf{M}\mathbf{j} = \mathbf{h}$, the Neumann sum was applied twice: once in $\mathbf{R}_{\text{v1}}$ itself, and again through the matrix inverse $(\mathbf{I} - \mathbf{R}_{\text{v1}}^\top)^{-1}$. For all-temperatures-prescribed configurations with reflecting walls, this yielded results that differed quantitatively from classical radiosity, while still satisfying all the algebraic identities of the v1 theorems. The v1 numerical validation cases (diffusion approximation, Crosbie-Schrenker) all used non-reflecting walls ($\rho=0$), so $\mathbf{R}_{\text{v1}} = \mathbf{0}$ and the double application of the Neumann sum was identically zero. The error was therefore invisible to the v1 validation suite.

The error was identified through symbolic validation against the textbook closed-form solutions for infinite parallel plates and concentric cylinders (Section~\ref{sec:symbolic_validation}). The v1 definitions failed to reproduce these formulas for $\rho > 0$, while the v2 definitions recover them as exact algebraic identities. The symbolic appendix that found the error has therefore been retained as a new validation tier in v2.

The corrected definitions used in this version are
\begin{equation*}
\mathbf{A} = \mathbf{F} \circ (\mathbf{1}-\mathbf{B}),
\quad
\mathbf{R} = \mathbf{F} \circ \mathbf{B},
\end{equation*}
which represent single-step absorption and reflection-scattering interactions. The multi-bounce structure is then generated implicitly by the matrix inverse $(\mathbf{I} - \mathbf{R}^\top)^{-1}$ when the linear system is solved, which is structurally analogous to classical radiosity and consistent with the canonical reference results validated in Section~\ref{sec:symbolic_validation}. The title of v1 used the word \textit{transformation} to describe the v1 augmentation of $\mathbf{F}$ via $\mathbf{S}_\infty$; v2 uses \textit{formulation} to describe the simpler single-step partition more honestly.

The corrected formulation preserves the central theorems of v1 with simpler proofs, and brings the following structural improvements:
\begin{enumerate}
\item The identity $\mathbf{A} + \mathbf{R} = \mathbf{F}$ is now immediate from the single-step definitions and the row-stochasticity of $\mathbf{F}$, replacing the multi-step derivation through $\mathbf{F}(\mathbf{1}-\mathbf{b}) = (\mathbf{I}-\mathbf{K})\mathbf{1}$ used in v1. The Perron eigenvector of $\mathbf{A}+\mathbf{R}$ becomes $\mathbf{1}$ rather than $\mathbf{1}-\mathbf{b}$.
\item Energy conservation (Theorem~\ref{thm:unconditional_conservation}) is now unconditional. The v1 result required uniform reflection-scattering coefficients $\gamma$, with an additional theorem covering the non-uniform case under specific row-selection hypotheses. The corrected formulation gives $\mathbf{1}^\top\mathbf{C} = \mathbf{0}^\top$ as an immediate consequence of $\mathbf{A} + \mathbf{R} = \mathbf{F}$ being row-stochastic, holding for arbitrary $\mathbf{B}$ and arbitrary boundary configurations.
\item The matrix $\mathbf{C} = \mathbf{I} - \mathbf{F}^\top$ is now purely geometric, depending only on the exchange factor matrix and not on the optical properties of the surfaces or medium. Optical properties enter only through $\mathbf{D}$. For parameter studies over optical properties at fixed geometry, $\mathbf{C}$ need only be assembled once.
\item The solution of the radiative transfer problem now requires only one linear solve, $\mathbf{M}\mathbf{j} = \mathbf{h}$, rather than two. The v1 method required first computing $\mathbf{S}_\infty = (\mathbf{I} - \mathbf{K})^{-1}\mathbf{F}$ to construct the matrices, then solving for $\mathbf{j}$. The single solve preserves any sparsity present in $\mathbf{F}$, which the v1 construction of $\mathbf{S}_\infty$ destroyed.
\item The conditioning analysis of the linear system (Section~\ref{sec:range_applicable_transparent} and onwards) now distinguishes the all-temperatures-prescribed configuration $\mathbf{M}=\mathbf{D}$ from the mixed-boundary configuration with radiative-equilibrium gas rows from $\mathbf{C}$. The all-temperatures-prescribed configuration is best conditioned at low $\omega$ and develops a conditioning frontier as $\omega\to 1$, while the mixed-boundary configuration has a conditioning frontier set by $M$ and $\tau$ alone, independent of $\omega$, reflecting the purely geometric character of $\mathbf{C}=\mathbf{I}-\mathbf{F}^\top$.
\end{enumerate}

The convergence theorems of v1 Section 3 have been removed. The Neumann-series convergence theorem characterized $\mathbf{S}_\infty$, which no longer appears in the formulation. The numerical stability theorem stated a Gershgorin-derived determinant bound, $|\det(\mathbf{I}-\mathbf{K})| \geq (1-\gamma)^{m+n}$ for uniform $\gamma$, and recast it as an a-priori precision requirement on $m+n$ given machine precision $\epsilon$. The bound itself is a valid consequence of Gershgorin's disk theorem, but the recasting as a necessary precision criterion conflates a worst-case eigenvalue configuration with the typical behaviour of $\mathbf{I}-\mathbf{K}$ for ray-traced $\mathbf{F}$. The eigenvalues of $\mathbf{I}-\mathbf{K}$ rarely sit simultaneously at the leftmost Gershgorin positions in actual geometries, so the bound is far from tight and presenting it as a discretization limit is misleading. The empirical conditioning analysis of Section~\ref{sec:range_applicable_transparent} carries the conditioning story for v2 directly, by measured determinants from ray-traced $\mathbf{F}$, without invoking a-priori bounds.

The numerical validation cases of v1 (diffusion approximation, Crosbie-Schrenker) used non-reflecting walls, where the v1 and v2 formulations agree. Those validations therefore remain valid for the corrected formulation. The discrepancy in Noble's formulation of Hottel's zonal method (Sections \ref{sec:hottel_symbolic} and \ref{sec:hottel_numerical}) survives the correction unchanged, as verified symbolically with the corrected matrices.

\end{document}